\renewcommand{\leq}{\leqslant}
\renewcommand{\geq}{\geqslant}
\newcommand{\mypara}[1]{\medskip\noindent{\sf\textbf{#1}}}
\newcommand{\mysmallpara}[1]{\medskip\noindent{\emph{#1}}}
\newcommand{\Reals}{\mathbb{R}}
\newcommand{\bd}{\partial}
\newcommand{\weight}{\mathrm{weight}}
\newcommand{\cost}{\mathrm{cost}}
\newcommand{\tree}{\ensuremath{\mathcal{T}}}
\newcommand{\graph}{\ensuremath{\mathcal{G}}}
\newcommand{\height}{\mathrm{height}}
\newcommand{\A}{\ensuremath{\mathcal{A}}}
\newcommand{\D}{\ensuremath{\mathcal{D}}}
\newcommand{\F}{\ensuremath{\mathcal{F}}}
\newcommand{\G}{\ensuremath{\mathcal{G}}}
\newcommand{\myH}{\ensuremath{\mathcal{H}}}
\newcommand{\myS}{\ensuremath{\mathcal{S}}}
\newcommand{\T}{\ensuremath{\mathcal{T}}}
\newcommand{\M}{\ensuremath{\mathcal{M}}}
\newcommand{\eps}{\varepsilon}
\newcommand{\etal}{\emph{et al.}\xspace}
\newcommand{\ig}{\graph^{\times}}
\newcommand{\gvis}{\ig_{\mathrm{vis},P}}
\newcommand{\sep}{\mathcal{\myS}}
\newcommand{\subdiv}{\M}
\newcommand{\poly}{\mathrm{poly}}
\newcommand{\mis}{\mbox{{\sc Maximum Independent Set}}\xspace}
\newcommand{\vc}{\mbox{{\sc Vertex Cover}}\xspace}
\newcommand{\fvs}{\mbox{{\sc Feedback Vertex Set}}\xspace}
\newcommand{\col}{\mbox{{\sc Coloring}}\xspace}
\newcommand{\nph}{\mbox{{\sc np}}-hard\xspace}
\newcommand{\mysup}{\mathrm{sup}}
\newcommand{\gsup}{\graph_{\mysup}}
\newcommand{\Vr}{V_{\mathrm{ref}}} 
\title{Clique-Based Separators for Geometric Intersection Graphs}
\author{Mark de Berg}{Department of Computer Science, TU Eindhoven, the Netherlands}{m.t.d.berg@tue.nl}{}{}
\author{S\'andor Kisfaludi-Bak\footnote{The research was conducted while the author was at the Max Planck Institute for Informatics, Saarbrücken, Germany.}}{Institute for Theoretical Studies, ETH Zürich, Switzerland}{sandor.kisfaludibak@eth-its.ethz.ch}{}{}
\author{Morteza Monemizadeh}{Department of Computer Science, TU Eindhoven, the Netherlands}{m.monemizadeh@tue.nl}{}{}
\author{Leonidas Theocharous}{Department of Computer Science, TU Eindhoven, the Netherlands}{l.theocharous@tue.nl}{}{}
\authorrunning{M.~de Berg and S.~Kisfaludi-Bak and M.~Monemizadeh and L.~Theocharous} 
\keywords{Computational geometry, intersection graphs, separator theorems}
\begin{document}
\maketitle
%--------------------------------------------------------------------------------------

%--------------------------------------------------------------------------------------
\begin{abstract}
Let $F$ be a set of $n$ objects in the plane and let $\ig(F)$ be its intersection graph.
%\mdb{Switched from $S$ to $F$ for the set of objects, to free up $S$ (or $\sep$) for separators, 
%since I did not like $\sigma$ in the context of graphs.}
A balanced clique-based separator of $\ig(F)$ is a set $\sep$ consisting of cliques
whose removal partitions $\ig(F)$ into components of size at most $\delta n$, for some 
fixed constant $\delta<1$. The weight of a clique-based separator is defined as
$\sum_{C\in\sep}\log (|C|+1)$.  
%\mdb{we could add "where the sum is over all cliques $C\in \sep$", but is that useful?} 
Recently De~Berg~\etal (SICOMP 2020) 
proved that if $S$ consists of convex fat objects, then $\ig(F)$ admits a balanced clique-based
separator of weight~$O(\sqrt{n})$. 
We extend this result in several directions, obtaining the following results.
\begin{itemize}
\item Map graphs admit a balanced clique-based separator of weight~$O(\sqrt{n})$, which is tight in the worst case. 
\item Intersection graphs of pseudo-disks admit a balanced clique-based separator of 
      weight~$O(n^{2/3}\log n)$. If the pseudo-disks are polygonal and of total complexity~$O(n)$
      then the weight of the separator improves to~$O(\sqrt{n}\log n)$. 
\item Intersection graphs of geodesic disks inside a simple polygon admit a balanced 
      clique-based separator of weight~$O(n^{2/3}\log n)$.
\item Visibility-restricted unit-disk graphs in a polygonal domain with $r$ reflex
      vertices  admit a balanced clique-based separator of weight~$O(\sqrt{n}+r\log(n/r))$, 
      which is tight in the worst case.
\end{itemize}
These results immediately imply sub-exponential algorithms for \mis (and, hence, \vc),
for \fvs, and for $q$-\col for constant~$q$ in these graph classes.
\end{abstract}
%--------------------------------------------------------------------------------------

\setcounter{page}{1}

%--------------------------------------------------------------------------------------
%--------------------------------------------------------------------------------------
\section{Introduction}
%--------------------------------------------------------------------------------------
The famous Planar Separator Theorem states that any planar graph $\graph=(V,E)$ with 
$n$~nodes\footnote{We use the terms \emph{nod}e and \emph{arc} when talking
about graphs, and \emph{vertex} and \emph{edge} for geometric objects.}
admits a subset $\sep\subset V$ of size~$O(\sqrt{n})$ nodes
whose removal decomposes $\graph$ into connected components of size at most~$2n/3$.
The subset~$\sep$ is called a balanced\footnote{For a separator to be balanced
it suffices that the components have size at most $\delta n$ for some constant~$\delta<1$.
When we speak of separators, we always mean balanced separators, unless
stated otherwise.} \emph{separator} of~$\graph$.
The theorem was first proved in 1979 by Lipton and Tarjan~\cite{LT-planar-separator-thm}, 
and it has been instrumental in the design of algorithms for planar graphs: it has
been used to design efficient divide-and-conquer algorithms, to design sub-exponential
algorithms for various \nph graph problems, and to design approximation algorithms for such
problems. 

The Planar Separator Theorem has been extended to various other graph classes.
%for example to graphs that can be embedded crossing-free on a surface of bounded genus~\cite{GHT-sep-bounded-genus}
% and to $k$-nearest-neighbor intersection graphs~\cite{MTTV-sep-sphere-packing}.
Our interest lies in \emph{geometric intersection graphs}, where 
the nodes correspond to geometric objects % in the plane or in some higher-dimensional space, 
and there is an arc between 
two nodes iff the corresponding objects intersect. If the objects are disks, the
resulting graph is called a \emph{disk graph}. Disk graphs, and in particular unit-disk graphs, 
are a popular model for wireless communication networks and have been studied extensively.
%%  \mdb{Add citations??}
Miller~\etal~\cite{MTTV-sep-sphere-packing} 
and Smith and Wormald~\cite{SW-geom-sep} showed that if~$F$ is a set of balls in~$\Reals^d$ 
of ply at most~$k$---the \emph{ply} of $F$ is the maximum number of objects in~$F$ with a 
common intersection---then the intersection graph of $F$ has a separator of size~$O(k^{1/d}n^{1-1/d})$. 
This was generalized by Chan~\cite{C-PTAS-fat} and Har-Peled and Quanrud~\cite{HQ-low-dens-sep}
to intersection graphs of so-called low-density sets. 
Separators for string graphs---a string graph is an intersection graph of sets of curves in the plane---have 
also been
considered~\cite{FoxPT10,Lee-string-sep,Matousek14}, with
Lee~\cite{Lee-string-sep} showing that a separator of size~$O(\sqrt{m})$ exists, 
where $m$ is the number of arcs of the graph.
\medskip

Even for simple objects such as disks or squares, one must restrict the ply 
to obtain a separator of small size. Otherwise the objects can form a single clique, 
which obviously does not have a separator of sublinear size. To design subexponential algorithms 
for problems such as \mis, however, one can also work with a separator consisting
of a small number of cliques instead of a small number of nodes. Such \emph{clique-based separators}
were introduced recently by De~Berg~\etal~\cite{bbkmz-ethf-20}. Formally, a clique-based separator of a graph~$\graph$
is a collection $\sep$ of node-disjoint cliques 
whose union is a balanced separator
of~$\graph$. The \emph{weight} of $\sep$ is defined as 
$\weight(\sep) := \sum_{C\in \sep} \log(|C|+1)$.
De~Berg~\etal\cite{bbkmz-ethf-20} proved that the intersection graph of any set $F$ of
$n$ convex fat objects in the plane admits a clique-based separator of weight~$O(\sqrt{n})$, 
and they used this to obtain algorithms with running time ~$2^{O(\sqrt{n})}$ for many classic \nph problems on such graphs.
This running time is optimal, assuming the Exponential-Time Hypothesis (ETH).
The result generalizes to convex fat objects in $\Reals^d$, where the bound on the weight
% \footnote{The actual result is stronger and holds for the weight function $|C|^{\frac{1}{2}-\eps}$ 
% (for any fixed~$\eps>0$) instead of $\log(|C|+1)$, but the latter weight function suffices for the applications.}
of the clique-based separator becomes~$O(n^{1-1/d})$. 

The goal of our paper is to investigate whether similar results are possible
for non-fat objects in the plane.
Note that not all intersection graphs admit clique-based separators
of small weight. String graphs, for instance, can have arbitrarily large complete 
bipartite graphs as induced subgraphs, in which case any balanced clique-based separator has weight~$\Omega(n)$.
%Indeed, to split the complete bipartite graph $K_{n,n}$
%into two or more non-empty components, a separator needs to include at least one 
%of the two parts of the graph completely and, hence, consist of~$\Omega(n)$ cliques.

The first type of intersection graphs we consider are map graphs, which are a natural
generalization of planar graphs. The other types are generalizations of disk graphs. 
One way to generalize disk graphs is to consider fat objects instead of disks,
as done by De~Berg~\etal~\cite{bbkmz-ethf-20}. We will study three other generalizations,
involving non-fat objects: pseudo-disks, geodesic disks, and visibility-restricted unit disks.
Next we define the graph classes we consider more precisely; see
Fig.~\ref{fi:int-graphs}  for an example of each graph class. 
%--------------------------------------------------------------------------------------
\begin{figure}
\begin{center}
\includegraphics[scale=1.2]{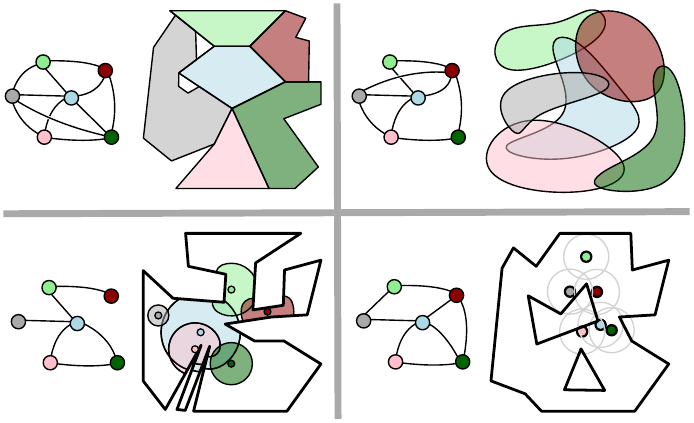}
\end{center}
\caption{A map graph, a pseudo-disk graph, a geodesic-disk graph,
and a visibility restricted unit-disk graph. For the latter class, the grey disks in the picture have radius~$\frac{1}{2}$.}
\label{fi:int-graphs}
\end{figure}
%--------------------------------------------------------------------------------------

In the following, we use $\ig(F)$ to denote the intersection graph induced by a set $F$ of objects.
For convenience, we do not distinguish between the objects and the corresponding nodes,
so we use $F$ to denote the set of objects as well as the set of nodes in~$\ig(F)$.
We assume that the objects in $F$ are connected, bounded, and closed.

%----------------------------------------------------------------------------------------
\mysmallpara{Map graphs.} 
%----------------------------------------------------------------------------------------
Let $\subdiv$ be a planar subdivision and~$F$ be its set of faces. 
The graph with node set~$F$ that has an arc between every pair of neighboring faces is 
called the \emph{dual graph} of~$\subdiv$, and it is planar. 
% The dual of a planar subdivision is a planar graph, 
% and every planar graph can be realized as the dual of a planar subdivision. 
Here two faces are neighbors if their boundaries have an edge of 
the subdivision in common.
A \emph{map graph}~\cite{CGP-map-graphs} is defined similarly, except now 
two faces are neighbors even if their boundaries meet in a
single point. Alternatively, we can define a map graph
as the intersection graph of a set $F$ of interior-disjoint regions in the plane.
%\mdb{A difference is that ``holes'' in a subdivision are faces, and, hence,
%nodes in the graph. In the second def this need not be the case. It seems 
%best to use the second def since it is more general, but it seems to require
%slightly more work sometimes.} 
Since arbitrarily many faces can share a vertex on their boundary, 
map graphs can contain arbitrarily large cliques. If 
at most $k$ faces meet at each subdivision vertex, the graph is called a
\emph{$k$-map graph}. Chen~\cite{C-map-graph-mis} proved that any $k$-map graph
has a (normal, not clique-based) separator of size~$O(\sqrt{kn})$, which is
also implied by Lee's recent result on string graphs~\cite{Lee-string-sep}.
%\skb{The $O(\sqrt{m})$ separator in string graphs seems like a direct generalization of this.}
% \mdb{Results from  \cite{FominLP0Z19} discussed later. Mention more papers on map graphs?}

%----------------------------------------------------------------------------------------
\mysmallpara{Pseudo-disk graphs.} 
%----------------------------------------------------------------------------------------
A set $F$ of objects is a set of pseudo-disks if for any~$f,f'\in F$ 
the boundaries $\bd f$ and $\bd f'$ intersect at most twice. 
Pseudo-disks were introduced in the context of motion planning by Kedem~\etal~\cite{KLPS-pseudo-disk-union},
who  proved that the union complexity of $n$ pseudo-disks is~$O(n)$.
Since then they have been studied extensively.
% \mdb{Mention some relevant results on pseudo-disks.}
We  consider two types of pseudo-disks: polygonal pseudo-disks with $O(n)$
vertices in total, and arbitrary pseudo-disks.

%----------------------------------------------------------------------------------------
\mysmallpara{Geodesic-disk graphs and visibility-restricted unit-disk graphs.}
%----------------------------------------------------------------------------------------
As mentioned, unit-disk graphs are popular models for wireless communication networks.
% where devices can communicate iff the distance between them is at most~1.
% Such communication network are equivalent to intersection graphs of unit-diameter disks.
We consider two natural generalizations of unit-disk graphs, which can be thought of
as communication networks in a polygonal environment that may obstruct communication. 
\begin{itemize}
\item \emph{Geodesic-disk graphs} in a simple polygon~$P$ 
      are intersection graphs of geodesic disks inside~$P$. (The \emph{geodesic disk} 
      with center $q\in P$ and radius~$r$ is the set of all points in~$P$ at geodesic distance at most~$r$
      from~$q$, where the geodesic distance between two points is the length of the shortest 
      path between them inside $P$.)
\item In \emph{visibility-restricted unit-disk graphs} the
      nodes correspond to a set $Q$ of $n$ points inside a polygon~$P$,
      which may have holes, and 
      two points $p,q\in Q$ are connected by an arc iff $|pq|\leq 1$
      and $p$ and $q$ see each other (meaning that $pq\subset P$).\footnote{Visibility-restricted
      unit-disk graphs are, strictly speaking, not intersection graphs. In particular, if 
      $R_q$ is defined as the region of points within $P$ that are visible from~$q$ and lie 
      within distance~$1/2$, then the visibility-restricted unit-disk graph is \emph{not} 
      the same as the intersection graph of the objects~$R_q$.} 
      A more general, directed version of such graphs was studied by
      Ben-Moshe~\etal~\cite{BHKM-vis-graph} under the name range-restricted visibility graph.
      They presented an output-sensitive algorithm to compute the graph.
      %\mdb{They used the term "range-restricted visibility graph". Should we also use it?
      %In fact, their setting is more general: each point has its own range $q_q$ and there is a directed
      %arc from $q$ to $p$ if $q$ sees $p$ and $|pq|\leq d_q$. Also mention \cite{CA-unit-disk-vis}.}
\end{itemize}
%  \mdb{We should check what is known for these two graphs. Note that \cite{CL-dist-vis-graph} is different}

%----------------------------------------------------------------------------------------
\mypara{Our results: clique-based separator theorems.}
%----------------------------------------------------------------------------------------
So far, clique-based separators were studied for fat objects:
De~Berg~\etal~\cite{bbkmz-ethf-20} consider convex or similarly-sized fat objects,
Kisfaludi-Bak~\etal\cite{KisfaludiMZ19} study how the fatness of
axis-aligned fat boxes impacts the separator weight,
and Kisfaludi-Bak~\cite{KB-hyperbolic-int-graphs} studies balls in hyperbolic space.
%\mdb{Fu does not describe it in these terms but should we mention it? And S\'andor's paper on hyperbolic space should probably be mentioned.}
The $O(\sqrt{n})$ bound on the separator weight is tight even for unit-disk graphs. Indeed, a~$\sqrt{n}\times\sqrt{n}$ 
grid graph can be realized as a unit-disk graph, and any separator of such a grid graph
must contain $\Omega(\sqrt{n})$ nodes. Since the maximum clique size in a grid graph is two, 
any separator must contain $\Omega(\sqrt{n})$ cliques. All graph classes 
we consider can realize a~$\sqrt{n}\times\sqrt{n}$ grid graph, so $\Omega(\sqrt{n})$
is a lower bound on the weight of the clique-based separators we consider.
We obtain the following results.
%\begin{itemize}
%\item 

In Section~\ref{sec:map-graphs} we show that any map graph has a clique-based separator of weight~$O(\sqrt{n})$. 
      This gives the first ETH-tight algorithms for \mis (and, hence, \vc), \fvs, and \col in map graphs; see below.

% \item 
In Section~\ref{sec:pseudo-disks} we show that any intersection graph of pseudo-disks 
      has a clique-based separator of weight~$O(n^{2/3}\log n)$. 
      If the pseudo-disks are polygonal and of total complexity~$O(n)$
      then the weight of the separator improves to~$O(\sqrt{n}\log n)$. 
      % This result is interesting because the previous clique-based separators were proved using a 
      % packing argument that crucially relied on the objects being fat.
      % MdB: Same argument applies to map graphs
      
%\item 
In Section~\ref{sec:geodesic-disks} we consider intersection graphs of geodesic 
      disks inside a simple polygon. At first sight, geodesic disks seem not much 
      harder to deal with than fat objects: 
      they can have skinny parts only in narrow corridors and then packing arguments may still be feasible. 
      Unfortunately another obstacle prevents us from applying a packing
      argument: geodesic distances in a simply connected polygon induce a metric space whose doubling
      dimension depends on the number of reflex vertices of the polygon. Nevertheless, by 
      showing that geodesic disks inside a simple polygon behave as pseudo-disks,
      we are able to obtain a clique-based separator of weight~$O(n^{2/3}\log n)$,
      independent of the number of reflex vertices.

%\item 

In Section~\ref{sec:visibility} we study visibility-restricted unit-disk graphs.
      We give an $\Omega(\min(n,r\log(n/r)) +\sqrt{n})$ lower bound for the separator weight,
      showing that a clique-based separator whose weight depends only on~$n$, the number of points 
      defining the visibility graph, is not possible.
      We then show how to construct a clique-based separator
      of weight~$O(\min(n,r\log(n/r)) +\sqrt{n})$.
%\end{itemize}

All separators can be computed in polynomial time. For map graphs and for the pseudo-disk intersection
graphs, we assume the objects have total complexity~$O(n)$. If the objects have curved
edges, we assume that basic operations (such as computing the intersection points of two such curves)
take $O(1)$~time.

%----------------------------------------------------------------------------------------
\mypara{Applications.}
%----------------------------------------------------------------------------------------
In Section~\ref{sec:applications} we apply our separator theorems to obtain subexponential algorithms for \mis, \fvs,
and $q$-\col for constant $q$ in the graph classes discussed above. The crucial property of these problems 
that makes our separator applicable, is that the possible ways in which a solution can 
``interact'' with a clique of size $k$ is polynomial in~$k$. We use known techniques (mostly from De Berg~\etal~~\cite{bbkmz-ethf-20}) 
to solve the three problems on any graph class that has small clique-based separators. 
% The resulting algorithms match the state of the art, or
% improve the best known algorithm, or give the first known subexponential algorithm.
% Most of these graph classes have known subexponential algorithms for \mis, \fvs and $q$-\col that we improve on, and in case of $q$-\col for constant $q$, we give the first  subexponential algorithm in several classes. Note that the restriction to $q=O(1)$ is natural: if $q$ is part of the input, then even unit disk graphs fail to admit subexponential algorithms~\cite{BiroBMMR18}.

All our graph classes are subsumed by string graphs. Bonnet and Rzazewski~\cite{BonnetR19}
showed that string graphs have $2^{O(n^{2/3}\log n)}$ 
algorithms for \mis and $3$-\col, and a $2^{n^{2/3}\log^{O(1)} n}$ algorithm 
for {\sc Feedback Vertex Set}, and that string graphs do not have subexponential algorithms for $q$-\col with $q\geq 4$ under ETH. One can also obtain subexponential algorithms in some of our classes from results of Fomin~\etal~\cite{FominLP0Z19,C-map-graph-mis} or Marx and Philipczuk~\cite{MarxP15}. 
% In map graphs, Fomin~\etal~\cite{FominLP0Z19} gave $2^{O(\sqrt{n}\log n)}$ algorithms\footnote{To be fair:  Fomin~\etal actually obtain an FPT algorithm with respect to the output
% size of \vc and \fvs, respectively; our algorithm is not parameterized.} 
% for \mis and \fvs. Our algorithms for map graphs run in $2^{O(\sqrt{n})}$ time. In our more restricted graph classes we sometimes obtain faster algorithms.
The running times we obtain match or slightly
improve the results that can be obtained from these existing results.
It should be kept in mind, however, that the existing results are for more general graph classes.
An exception are our results on map graphs, which were explicitly studied before and where we
improve the running time for \mis and \fvs from $2^{O(\sqrt{n}\log n)}$ to $2^{O(\sqrt{n})}$.
(But, admittedly, the existing results apply in the parameterized setting while ours don't.)
In any case, the main advantage of our approach is that it allows us to solve \mis, \fvs and
$q$-\col on each of the mentioned graph classes in a uniform manner.
% \input{sec-preliminaries.tex}
%--------------------------------------------------------------------------------------
\section{Map graphs}
\label{sec:map-graphs}
%--------------------------------------------------------------------------------------
Recall that a map graph is the intersection graph of a set~$F$ of interior-disjoint
objects in the plane. We construct a clique-based separator for $\ig(F)$ in
four steps. First, we construct a bipartite plane \emph{witness graph}~$\myH_1$ 
with node set~$P\cup Q$, where the nodes in $P$ correspond to the objects in~$F$ and the nodes in $Q$ 
(with their incident arcs) model the adjacencies in~$\ig(F)$.
Next, we replace each node~$q\in Q$ by a certain gadget
whose ``leaves'' are the neighbors of~$q$, and we triangulate the resulting graph.
We then apply the Planar Separator Theorem to obtain a separator for the resulting graph~$\myH_2$.
Finally, we turn the separator for $\myH_2$ into a clique-based separator for~$\ig(F)$.
Next we explain these steps in detail.

%--------------------------------------------------------------------------------------
\mypara{Step~1: Creating a witness graph.}
%--------------------------------------------------------------------------------------
To construct a witness graph for~$\ig(F)$ we use the method of Chen~\etal~\cite{CGP-map-graphs}: 
take a point $p_f$ in the interior of each object $f\in F$, 
and take a \emph{witness point} $q\in \bd f \cap \bd f'$ for each pair of touching objects $f,f'\in F$ 
and add arcs from  $q$ to the points~$p_f$ and ~$p_{f'}$.
%--------------------------------------------------------------------------------------
\begin{figure}
\begin{center}
\includegraphics{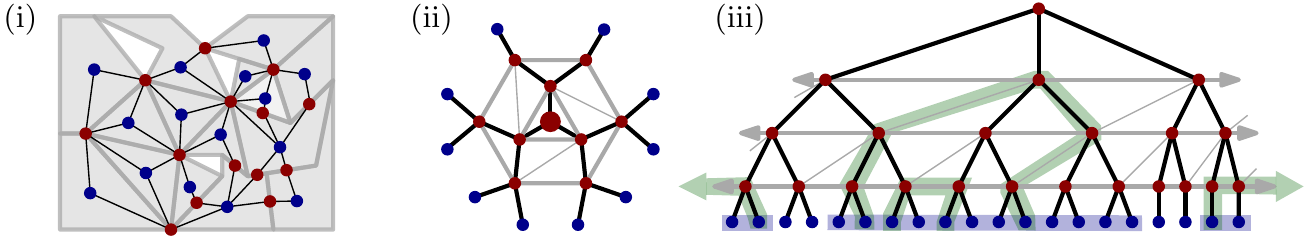}
\end{center}
\caption{(i) A witness graph for the map graph induced by the grey regions. 
    Points in $P$ are blue, points in $Q$ are red. 
(ii) The gadget used to replace a witness point. The edges of~$\tree_q$ are black, 
     the cycles connecting nodes at the same level are grey and thick, 
     the edges to triangulate the 4-cycle are grey and thin. 
(iii) The green paths show an example of how the separator can intersect a gadget. 
      (Note that the tree ``wraps around'', as in part~(ii) of the figure; see also 
      one of the green paths.) The objects added to the 
      clique~$C_q$ correspond to the leaves indicated by the blue rectangles.}
\label{fi:convert-map-graph}
\end{figure}
%--------------------------------------------------------------------------------------
Let $P=\{p_f : f \in F\}$ and let $Q$ be the set of all witness points added.
We denote the resulting bipartite graph with node set $P\cup Q$ by~$\myH_1$; 
see Figure~\ref{fi:convert-map-graph}(i) for an example.
%Note that $f_i$ and $f_j$ are neighbors in $\ig(F)$ iff the points $p_i$ and $p_j$ are at distance~2 in $\myH$. (Note that, $\ig(F)$ is equivalent to $\myH^2[P]$; here $\myH^2$ is the square of the graph~$\myH$ and $\myH^2[P]$ is the subgraph of $\myH^2$ induced by~$P$.)
Observe that points where many objects meet can serve
as witness points for many neighboring pairs in $\ig(F)$.
Chen~\etal~\cite[Lemma 2.3]{CGP-map-graphs} proved that any map graph admits 
a witness set~$Q$ of size~$O(n)$. If the objects in $F$ are polygons with $O(n)$
vertices in total then $Q$ can be found in $O(n)$ time (since the vertices
can serve as the set $Q$.)

%--------------------------------------------------------------------------------------
\mypara{Step~2: Replacing witness points by gadgets and triangulating.}
%--------------------------------------------------------------------------------------
We would like to construct a separator for~$\myH_1$ using the Planar Separator Theorem, 
and convert it to a clique-based separator for~$\ig(F)$. For every witness point~$q\in Q$ 
in the separator for~$\myH_1$, the conversion would add a clique $C_q$ to the clique-based separator,
namely, the clique corresponding to all objects $f\in F$ such that $p_f$ is adjacent to~$q$.
However, the node~$q$ adds~1 to the separator size, but the clique~$C_q$ adds~$\log(|C_q|+1)$ to 
the weight of the clique-based separator. To deal with this we modify~$\myH_1$, as follows. 
\medskip

Consider a node~$q\in Q$. Let $N(q)\subseteq P$ denote the set of neighbors of~$q$.
For all nodes $q\in Q$ with $|N(q)|\geq 3$, we
replace the star induced by~$\{q\}\cup N(q)$ by a gadget $G_q$, which is illustrated in
Figure~\ref{fi:convert-map-graph}(ii) and defined as follows.

First, we create a tree $\tree_q$ with root~$q$ and whose leaves are the nodes in~$N(q)$,
as follows. Define the \emph{level} $\ell(v)$ of a node~$v$ in $\tree_q$ to be the distance of $v$
to the root; thus the root has level~0, its children have level~1, and so on.
All leaves in $\tree_q$ are at the same level, denoted~$\ell_{\max}$.
The root has degree~3, nodes at level $\ell$ with $1\leq \ell < \ell_{\max}-1$
have degree~2, and nodes at level $\ell_{\max}-1$ have degree~2 or~1.
For each~$\ell<\ell_{\max}$ we connect the nodes at level~$\ell$ into a cycle. 
After doing so, all faces in the gadget (except the outer face) are triangles or
4-cycles. We finish the construction by adding a diagonal in each 4-cycle.
Define the \emph{height} of a node~$v$ as $\height(v) := \ell_{\max}-\ell(v)$.
The following observation follows from the construction.
%--------------------------------------------------------------------------------------
\begin{observation}\label{obs:gadget}
Let $v$ be a node at height~$h>0$ in the gadget $G_q$. 
\begin{enumerate}[(i)]
\item The subtree of $\tree_q$ rooted at $v$, denoted $\tree_q(v)$, has at most $3\cdot 2^{h-1}$ leaves.
\item The distance from $v$ to any leaf in $\tree_q$ is at least~$h$.
\end{enumerate}
\end{observation}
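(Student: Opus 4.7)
The observation follows directly from the layered structure of $G_q$, with one subtlety due to the fact that the root has $3$ children while every other internal node of $\tree_q$ has at most $2$ children.

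For part~(i), I would proceed by induction on $h$, but with a slightly strengthened hypothesis: every non-root node $v$ at height $h$ satisfies $|\text{leaves}(\tree_q(v))| \leq 2^h$. The base case $h=1$ is immediate from the construction, since a non-root node at level $\ell_{\max}-1$ has at most two leaf children. For the inductive step, such a non-root $v$ at height $h \geq 2$ has at most $2$ children, all at height $h-1$ and all non-root, so by the inductive hypothesis $\tree_q(v)$ has at most $2 \cdot 2^{h-1} = 2^h$ leaves. For the root at height $\ell_{\max}$, summing the non-root bound over its three children at height $\ell_{\max}-1$ gives $3 \cdot 2^{\ell_{\max}-1}$ leaves. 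Since $2^h \leq 3 \cdot 2^{h-1}$ for $h \geq 1$, the bound $3 \cdot 2^{h-1}$ from the statement covers both the root and non-root cases.

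For part~(ii), the key observation is that every edge of $\tree_q$ joins nodes on two consecutive levels, so along any path in $\tree_q$ the level of the current node changes by at most one per step. Since $v$ lies at level $\ell_{\max}-h$ and every leaf of $\tree_q$ lies at level $\ell_{\max}$, any $v$-to-leaf path in $\tree_q$ must traverse at least $h$ edges. (If one wanted the analogous statement inside the entire gadget~$G_q$, the same argument goes through, because the extra edges---the same-level cycles and each diagonal of a $4$-cycle spanning two consecutive levels---also change the level by at most one per step.)

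The only delicate point is the strengthened inductive hypothesis in part~(i); a naive induction that treats the root and internal nodes uniformly would accumulate a factor of $3$ at the root and yield the weaker bound $9 \cdot 2^{h-2}$, exceeding the desired $3 \cdot 2^{h-1}$.
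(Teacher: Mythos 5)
Your proof is correct and is essentially the argument the paper leaves implicit: the paper simply asserts that Observation~\ref{obs:gadget} ``follows from the construction,'' and your verification---a height induction that isolates the root's three children from the binary branching elsewhere for part~(i), and the level-changes-by-at-most-one-per-edge count for part~(ii)---is the intended routine check. Your parenthetical remark that (ii) also holds for the extra cycle and diagonal edges of $G_q$ is a nice touch, since that is the form in which the observation is actually used in the weight bound of Lemma~\ref{le:map-graph-separator-weight}.
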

%--------------------------------------------------------------------------------------
To unify the exposition, it will be convenient to also create a gadget for the case where
$q$ has only two neighbors in~$\myH_1$, say $p_f$ and $p_{f'}$. We then define $\tree_q$
to consist of the arcs~$(q,p_f)$ and $(q,p_{f'})$. Note that Observation~\ref{obs:gadget}
holds for this gadget as well.

By replacing each witness point $q\in Q$  with a gadget~$G_q$ as above,
we obtain a (still planar) graph. We triangulate this graph to obtain a maximal
planar graph~$\myH_2$.

%--------------------------------------------------------------------------------------
\mypara{Step~3: Constructing a separator for~$\myH_2$.}
%--------------------------------------------------------------------------------------
We now want to apply the Planar Separator Theorem to $\myH_2$. Our final goal
is to obtain a balanced clique-based separator for $\ig(F)$. Hence, we want the separator
for $\myH_2$ to be balanced with respect to~$P$. We will also need the 
separator for $\myH_2$ to be connected. Both properties are guaranteed by the
following version of the Planar Separator Theorem, 
which was proved by Djidjev and Venkatesan~\cite{DV-cycle-separator}.
%--------------------------------------------------------------------------------------
\begin{quotation}
\vspace*{-2mm} \noindent {\sf\textbf{Planar Separator Theorem.}}
Let $\graph=(V,E)$ be a maximal planar graph with $n$ nodes. Let each node $v\in V$ 
have a non-negative cost, denoted~$\cost(v)$, with $\sum_{v\in V}\cost(v)=1$. 
Then $V$ can be partitioned in $O(n)$ time into three sets $A,B,\sep$ such that (i) $\sep$ is a simple 
cycle of size $O(\sqrt{n})$, (ii) $\graph$ has no arcs between a node in $A$ and a node in $B$,
and (iii) $\sum_{v\in A} \cost(v) \leq 2/3$ and $\sum_{v\in B} \cost(v) \leq 2/3$. \vspace*{-2mm} 
\end{quotation}
%--------------------------------------------------------------------------------------
When applying the Planar Separator Theorem to $\myH_2$, we set 
$\cost(p) := 1/n$ for all nodes $p_f\in P$ and $\cost(v) := 0$ for all other nodes.
We denote the resulting separator for $\myH_2$ by $\sep(\myH_2)$
and the node sets inside and outside the separator by $A(\myH_2)$ and~$B(\myH_2)$,
respectively.

%--------------------------------------------------------------------------------------
\mypara{Step~4: Turning the separator for~$\myH_2$ into a clique-based separator for~$\ig(F)$.}
%--------------------------------------------------------------------------------------
We convert $\sep(\myH_2)$ into a clique-based separator $\sep$ for $\ig(F)$ as follows.
\begin{itemize}
\item For each node $p_f\in\sep(\myH_2)\cap P$ we put the (singleton) clique $\{f\}$ into~$\sep$.
\item For each gadget $G_q$ we proceed as follows. Let $V_q$ be the set of all nodes 
      $v\in \tree_q$ that are in $\sep(\myH_2)$, and define 
      $C_q := \{ f\in F: \mbox{$p_f$ is a leaf of $\tree_q(v)$ that has an ancestor in $V_q$}\}$;
      see Figure~\ref{fi:convert-map-graph}(iii). Observe that $C_q$ is a clique in~$\ig(F)$.
      We add\footnote{We tacitly assume that if an object is in 
      multiple cliques in~$\sep$, we remove all but one of its occurrences.}
      $C_q$ to~$\sep$.
\end{itemize}

The clique-based separator~$\sep$ induces a partition of $F\setminus \bigcup_{C\in\sep}C$
into two parts $A$ and $B$, with  $|A|,|B|\leq 2n/3$, in a natural way,
namely as
$A := \{ f\in F: f\not\in \bigcup_{C\in\sep}C \mbox{ and } p_f \in A(\myH_2)\}$
and
$B := \{ f\in F: f\not\in \bigcup_{C\in\sep}C \mbox{ and } p_f \in B(\myH_2)\}$. The following lemma ensures that $\sep$ is a valid separator.
\begin{lemma}\label{le:map-graph-separator-sep}
There are no arcs in $\ig(F)$ between a node in $A$ and a node in $B$.
\end{lemma}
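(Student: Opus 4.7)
The plan is to argue by contradiction: assume there is an arc $ff' \in \ig(F)$ with $f \in A$ and $f' \in B$, and derive a contradiction by choosing a specific path in $\myH_2$ from $p_f$ to $p_{f'}$ whose intersection with $\sep(\myH_2)$ can be read back into the clique-based separator~$\sep$.

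First I would unpack the hypotheses. By the definitions of $A$ and $B$, we have $p_f \in A(\myH_2)$, $p_{f'} \in B(\myH_2)$, and neither $f$ nor $f'$ lies in $\bigcup_{C \in \sep} C$. Because $f$ and $f'$ touch in the plane, the witness-graph construction of Step~1 supplies a witness point $q \in Q$ that is adjacent in $\myH_1$ to both $p_f$ and $p_{f'}$. After Step~2, this means $p_f$ and $p_{f'}$ appear as two leaves of the tree $\tree_q$ sitting inside the gadget $G_q \subseteq \myH_2$.

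The next step is to pick the right path. Let $\pi$ be the unique $p_f$-to-$p_{f'}$ path in $\tree_q$, which ascends from $p_f$ to the lowest common ancestor of $p_f$ and $p_{f'}$ and then descends to $p_{f'}$. Since $\tree_q$ is a subgraph of $G_q \subseteq \myH_2$, the path $\pi$ is also a path in $\myH_2$. The Planar Separator Theorem invoked in Step~3 ensures that $\myH_2$ has no arc between $A(\myH_2)$ and $B(\myH_2)$; combined with $p_f \in A(\myH_2)$ and $p_{f'} \in B(\myH_2)$, this forces some node $v$ on $\pi$ to lie in $\sep(\myH_2)$. Since $\pi \subseteq \tree_q$, we have $v \in V_q$.

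A short case analysis then finishes the argument. If $v = p_f$, Step~4 puts the singleton $\{f\}$ into $\sep$, contradicting $f \notin \bigcup_{C \in \sep} C$; the case $v = p_{f'}$ is symmetric. Otherwise $v$ is an ancestor in $\tree_q$ of at least one of the two leaves (possibly the LCA, which is an ancestor of both); without loss of generality $v$ is an ancestor of $p_f$. Then $p_f$ is a leaf of $\tree_q(v)$ with $v \in V_q$, so by the definition of $C_q$ in Step~4 we have $f \in C_q \in \sep$, again contradicting $f \notin \bigcup_{C \in \sep} C$. I do not expect any serious obstacle here; the lemma is essentially bookkeeping that verifies the construction commutes with the Planar Separator Theorem. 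The one subtlety is to use the tree path in $\tree_q$ rather than an arbitrary $p_f$--$p_{f'}$ path in $\myH_2$, so that the separator node we find is guaranteed to be an ancestor of a leaf in $\tree_q$ and not merely a node on a cycle arc or diagonal introduced during the gadget construction, which would not translate directly into membership in $C_q$.
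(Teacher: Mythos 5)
Your proof is correct and follows essentially the same approach as the paper: take the witness point $q$, consider the tree path $\pi$ from $p_f$ to $p_{f'}$ in $\tree_q$, and observe that any separator node on $\pi$ is an ancestor of $p_f$ or $p_{f'}$, forcing $f$ or $f'$ into a clique of $\sep$. The paper phrases the contradiction in the other direction (no node of $\pi$ can be in $\sep(\myH_2)$, so $p_f$ and $p_{f'}$ remain connected), but the content is identical.
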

%------------------------------------------------------------------------------------------
\begin{proof}
Suppose for a contradiction that there are objects $f\in A$ and $f'\in B$
such that $(f,f')$ is an arc in $\ig(F)$. Let $q\in Q$ be a witness point for the 
arc~$(f,f')$; thus $(p_f,q)$ and $(p_{f'},q)$ are arcs in~$\myH_1$. 
Consider the gadget~$G_q$ and the tree~$\tree_q$. Let $\pi$ denote
the path from $p_f$ to $p_{f'}$ in $\tree_q$. Note that none of the nodes on $\pi$ 
can be in~$\sep(\myH_2)$, otherwise $V_q$ contains an ancestor of $p_f$ or of $p_{f'}$,
and (at least) one of the nodes $p_f,p_{f'}$ is in a clique that was added to $\sep$.
But then the nodes $p_f,p_{f'}$ are still connected in $\myH_2$ after the removal
of~$\sep(\myH_2)$. Hence, we have $p_f,p_{f'} \in A(\myH_2)$ or $p_f,p_{f'} \in B(\myH_2)$,
both contradicting that $f\in A$ and $f'\in B$.
\end{proof}
It remains to prove that $\sep$ has the desired weight.
%--------------------------------------------------------------------------------------
\begin{lemma}\label{le:map-graph-separator-weight}
The total weight of the separator $\sep$ satisfies $\sum_{C\in \sep} \log(|C|+1)=O(\sqrt{n})$.
\end{lemma}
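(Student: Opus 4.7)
My plan is to decompose the weight as $\weight(\sep) = W_1 + W_2$, where $W_1$ collects the singleton cliques $\{f\}$ coming from $\sep(\myH_2)\cap P$ and $W_2=\sum_q\log(|C_q|+1)$ collects the cliques arising from the gadgets, and to bound each by $O(\sqrt n)$. The first part is immediate: each singleton contributes $\log 2 = O(1)$, so $W_1 = O(|\sep(\myH_2)|)$, and it suffices to show $|\sep(\myH_2)|=O(\sqrt n)$. For this I would first argue that $|V(\myH_2)| = O(n)$: we have $|P|=n$ and $|Q|=O(n)$ by Chen~\etal, and each gadget $G_q$ has $O(|N(q)|)$ nodes since $\tree_q$ is essentially a balanced binary tree with $|N(q)|$ leaves; planarity of $\myH_1$ gives $\sum_q|N(q)| = 2|E(\myH_1)| = O(n)$. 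Triangulation adds no new nodes, so $|V(\myH_2)|=O(n)$ and the Planar Separator Theorem gives $|\sep(\myH_2)|=O(\sqrt n)$.

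To handle $W_2$, fix a witness point $q$ with $V_q\neq\emptyset$ and let $h^*(q) := \max_{v\in V_q}\height(v)$. Observation~\ref{obs:gadget}(i) yields $|C_q| \leq \sum_{v\in V_q} 3\cdot 2^{\height(v)-1} \leq 3|V_q|\cdot 2^{h^*(q)}$, so $\log(|C_q|+1) = O(\log|V_q| + h^*(q))$. The crux of the argument, and the main obstacle, is a lemma of the form $h^*(q) \leq (|V_q|+1)/2$; once this is established, $\log(|C_q|+1) = O(|V_q|)$ and summing over~$q$ using $\sum_q|V_q|\leq|\sep(\myH_2)|=O(\sqrt n)$ finishes the proof.

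I would prove this lemma by a topological argument exploiting the cycle structure of the separator. Let $v\in V_q$ realize the height $h^*(q)$. Since $\sep(\myH_2)$ is a simple cycle, its intersection with $G_q$ is a collection of simple paths in the gadget whose endpoints are incident to edges leaving $G_q$. The key geometric fact is that only the leaves of $\tree_q$ and their parents, i.e.\ nodes at height $0$ or $1$, lie on the outer boundary of the gadget in the planar embedding of $\myH_2$, so only these can acquire new non-gadget neighbors during triangulation. Every edge of $G_q$ changes the level of its endpoints by at most one, so by Observation~\ref{obs:gadget}(ii) the graph distance in $G_q$ from $v$ to any node of height $\leq 1$ is at least $h^*(q)-1$. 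Following the cycle through $v$ in both directions until it exits the gadget therefore produces a path of at least $2(h^*(q)-1)+1$ gadget nodes, all of which lie in $V_q$, giving $|V_q| \geq 2h^*(q)-1$ as required. The subtle point to check is that triangulation does not introduce shortcuts from deep gadget nodes directly to the outside, which I would verify by inspecting the faces of $\myH_2$ that span the gadget boundary.
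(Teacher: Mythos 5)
Your argument is correct and follows essentially the same route as the paper: you use Observation~\ref{obs:gadget}(i) to bound $|C_q|$ exponentially in the maximum height of a separator node in the gadget, and the fact that all gadget edges change the level by at most one while the separator cycle can only leave $G_q$ at boundary (height $\leq 1$) nodes to charge that height to $|V_q|$ --- which is exactly the role played in the paper by Observation~\ref{obs:gadget}(ii) together with its claim that each path of $\sep(\myH_2)\cap G_q$ enters and exits at a node of $N(q)$. The only cosmetic difference is that you aggregate per gadget via the single maximum height $h^*(q)$ and the walk in both directions along the cycle, whereas the paper decomposes $\sep(\myH_2)\cap G_q$ into maximal paths $\pi$ and bounds the descendants $D_\pi$ per path; both give $\log(|C_q|+1)=O(|V_q|)$ and then sum against the $O(\sqrt{n})$ separator size.
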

%--------------------------------------------------------------------------------------
\begin{proof}
Since $\sep(\myH_2)$ contains $O(\sqrt{n})$ nodes, it suffices to bound the total
weight of the cliques added for the gadgets~$G_q$. 
Consider a gadget~$G_q$. Recall that $V_q$ is the set of all nodes $v\in \tree_q$ that are 
in~$\sep(\myH_2)$. We claim that $\log(|C_q|+1) = O(|V_q|)$, which implies that
$
\sum_q \log(|C_q|+1) = \sum_q O(|V_q|) = O(\sqrt{n}),
$
as desired. It remains to prove the claim. 

Since $\sep(\myH_2)$ is a simple cycle, its intersection with $G_q$ consists of one or more paths. 
Each path~$\pi$ enters and exits $G_q$ at a node in~$N(q)$.
Let $D_{\pi}$ denote the set of all descendants of the nodes in~$\pi$. We will prove 
that $\log(|D_{\pi}|+1) = O(|\pi|)$, where $|\pi|$ denotes the number of nodes of $\pi$.
This implies the claim since
$
\log(|C_q|+1) \leq \sum_{\pi} \log(|D_{\pi}|+1) = \sum_{\pi} O(|\pi|) = O(|V_q|).
$

To prove that $\log(|D_{\pi}|+1) = O(|\pi|)$, let $h_{\max}$ be the maximum height 
of any node in~$\pi$. Thus $|\pi|\geq h_{\max}$ by Observation~\ref{obs:gadget}(ii). 
Consider all subtrees of height~$h_{\max}$ in $\tree_q$. If $\pi$ visits $t$ such subtrees, 
then $|\pi|\geq t$. Moreover, $|D_{\pi}| \leq 3t\cdot 2^{h_{\max}-1}$ by 
Observation~\ref{obs:gadget}(i). Hence,
$
\log(|D_{\pi}|+1) \leq \log \left(3t\cdot 2^{h_{\max}-1}+1\right) < h_{\max} + \log(3t) = O(\max(h_{\max},t)) = O(|\pi|). 
$
%which finishes the proof.
\end{proof}
%--------------------------------------------------------------------------------------
By putting everything together we obtain the following theorem.
%--------------------------------------------------------------------------------------
\begin{theorem}\label{th:map-graph-separator}
Let $F$ be a set of $n$ interior-disjoint regions in the plane. Then the 
intersection graph $\ig(F)$ has a clique-based balanced separator of weight~$O(\sqrt{n})$.
The separator can be computed in $O(n)$ time, assuming that
the total complexity of the objects in $F$ is~$O(n)$. 
\end{theorem}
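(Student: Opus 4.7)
The plan is to combine the four-step construction already outlined (witness graph, gadget replacement, planar separator, and conversion back to cliques) with the two lemmas that establish validity and weight, and then to account for the running time. The correctness of the separator follows directly from Lemma~\ref{le:map-graph-separator-sep}: any adjacency in $\ig(F)$ is certified by a witness point $q$, which lives inside some gadget $G_q$, and if neither endpoint is ``absorbed'' into the clique $C_q$, then the root-to-leaf paths in $\tree_q$ for the two endpoints avoid $\sep(\myH_2)$ and hence remain connected in $\myH_2\setminus\sep(\myH_2)$, forcing the two objects to be on the same side of the partition. The balance property with respect to $F$ is automatic from the choice $\cost(p_f)=1/n$ and $\cost(v)=0$ for all other nodes, since the planar separator theorem guarantees $\sum_{v\in A}\cost(v),\sum_{v\in B}\cost(v)\leq 2/3$.

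The weight bound is where the binary-tree gadget pays off, and this is what Lemma~\ref{le:map-graph-separator-weight} formalizes. The overall size of the planar separator $\sep(\myH_2)$ is $O(\sqrt{N})$ where $N$ is the total number of nodes in $\myH_2$. Since the witness set has size $O(n)$ by \cite{CGP-map-graphs} and each gadget $G_q$ has size $O(|N(q)|)$, with $\sum_q|N(q)|=O(n)$, we have $N=O(n)$ and hence $|\sep(\myH_2)|=O(\sqrt{n})$. The crucial exchange is then: instead of paying $1$ per separator node in $\myH_2$, we pay $\log(|C_q|+1)$ for each gadget touched by the separator cycle. The balanced-tree geometry of $G_q$ guarantees that any path of length $k$ through $\tree_q$ can ``cover'' at most $O(k\cdot 2^{k})$ leaves (by Observation~\ref{obs:gadget}(i) combined with the fact that a path of length $k$ visits at most $k$ subtrees of height at most $k$), so the logarithm of its leaf-set is $O(k)$, exactly matching the number of separator nodes consumed in the gadget. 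Summing over all gadgets gives $\sum_{C\in\sep}\log(|C|+1)=O(|\sep(\myH_2)|)=O(\sqrt{n})$.

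For the running time, assuming the objects in $F$ are polygons of total complexity $O(n)$, the witness set $Q$ can be read off from the vertices of the subdivision in $O(n)$ time, so $\myH_1$ has $O(n)$ nodes and arcs. Each gadget $G_q$ has size linear in $|N(q)|$, so the total size of $\myH_2$ is $O(n)$, and triangulation of a planar graph can be done in linear time. The Djidjev--Venkatesan version of the planar separator theorem runs in $O(N)=O(n)$ time, and converting $\sep(\myH_2)$ back to $\sep$ requires only a traversal of each affected gadget's tree, which is $O(n)$ in total.

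The only subtle point in this plan, and the main obstacle to overcome in the argument, is the design of the gadget in Step~2: one needs it to be planar (so that the Planar Separator Theorem still applies after gluing the gadgets into $\myH_1$), to replace the star at $q$ faithfully (so that separation in $\myH_2$ really implies the right clique-based separation in $\ig(F)$), and, most delicately, to amortize the $\log(|C_q|+1)$ charge against the number of separator nodes inside $G_q$. The binary-tree-plus-level-cycles construction of Figure~\ref{fi:convert-map-graph}(ii), together with Observation~\ref{obs:gadget}, is exactly what makes this amortization go through; everything else in the proof is bookkeeping.
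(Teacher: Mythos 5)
Your proposal is correct and follows essentially the same route as the paper: it assembles the four-step construction with Lemma~\ref{le:map-graph-separator-sep} for validity, Lemma~\ref{le:map-graph-separator-weight} (via Observation~\ref{obs:gadget}) for the $O(\sqrt{n})$ weight, and the observation that $\myH_2$ has $O(n)$ nodes so the Djidjev--Venkatesan separator and the whole pipeline run in $O(n)$ time. The only addition you make beyond the paper's ``putting everything together'' is spelling out explicitly that $\sum_q |N(q)| = O(n)$ (planarity of the bipartite witness graph), which the paper leaves implicit; this is a welcome clarification, not a deviation.
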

%--------------------------------------------------------------------------------------

%--------------------------------------------------------------------------------------
\section{Pseudo-disk graphs}
\label{sec:pseudo-disks}
%--------------------------------------------------------------------------------------
Our clique-based separator construction for a set $F$ of pseudo-disks  uses 
so-called planar supports, defined as follows. Let $\myH$ be a hypergraph with 
node set~$Q$ and hyperedge set~$H$. A graph $\gsup$ is a \emph{planar support}~\cite{RamanRay20} for $\myH$ if 
$\gsup$ is a planar graph with node set $Q$ such that for any hyperedge $h\in H$ the 
subgraph of $\gsup$ induced by the nodes in~$h$ is connected. In our application we let
the node set $Q$ correspond to a set of points stabbing all pairwise intersections between 
the pseudo-disks, that is, for each intersecting pair $f,f'\in F$ there will be a point~$q\in Q$ that
lies in~$f\cap f'$. The goal is to keep the size of $Q$ small, by capturing all intersecting
pairs with few points.
The hyperedges are defined by the regions in~$F$, that is, for every $f\in F$
there is a hyperedge~$h_f := Q\cap f$. Let $\myH_Q(F)$ denote the resulting hypergraph.
%--------------------------------------------------------------------------------------
\begin{lemma}\label{le:planar-support-to-sep}
Let $F$ be a set of $n$ objects in the plane, let $Q$ be a set of points stabbing all 
pairwise intersections in~$F$, and let $\myH_Q(F)$ denote the hypergraph as defined above.
If $\myH_Q(F)$ has a planar support~$\gsup$ then $\ig(F)$ has a clique-based separator
of size $O(\sqrt{|Q|})$ and weight $O(\sqrt{|Q|} \log n)$.
\end{lemma}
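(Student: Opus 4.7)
The plan is to lift a vertex separator of $\gsup$, produced by the Planar Separator Theorem, to a clique-based separator of $\ig(F)$ by replacing each selected node $q$ in the planar separator by the set $C_q:=\{f\in F : q\in f\}$. Note that $C_q$ is automatically a clique in $\ig(F)$, since all its members contain the common point~$q$.

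First, I would encode the balance requirement as a weighting on $\gsup$. Isolated nodes of $\ig(F)$ play no role in balancing and can be discarded, so I may assume every $f\in F$ has $h_f = Q\cap f \neq \emptyset$. Assign each $f\in F$ to an arbitrary point $q_f\in h_f$, and give each $q\in Q$ a cost proportional to the number of objects assigned to it, normalized so that $\sum_{q\in Q}\cost(q) = 1$. Applying the Planar Separator Theorem to $\gsup$ yields a node set $\sep_Q \subseteq Q$ of size $O(\sqrt{|Q|})$ that partitions $Q\setminus\sep_Q$ into sets $A_Q,B_Q$ of cost at most $2/3$ each.

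Second, I would take $\sep := \{C_q : q\in \sep_Q\}$, which immediately gives $|\sep| = O(\sqrt{|Q|})$ and $\sum_{C\in\sep}\log(|C|+1) \leq |\sep_Q|\cdot \log(n+1) = O(\sqrt{|Q|}\log n)$ because $|C_q|\leq n$. Define $A := \{f\in F\setminus\bigcup_{C\in\sep}C : q_f\in A_Q\}$ and $B$ analogously. The cost bounds on $A_Q,B_Q$ translate directly into $|A|,|B|\leq 2n/3$, so the partition is balanced.

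The step that needs genuine argument, and where the planar-support hypothesis is used, is showing that no arc of $\ig(F)$ connects $A$ to $B$. Suppose for contradiction that $f\in A$ and $f'\in B$ intersect. Then some stabbing point $q\in Q$ lies in $f\cap f'$, so $q\in h_f\cap h_{f'}$. Since $f\notin\bigcup_{C\in\sep}C$, no point of $h_f$ can lie in $\sep_Q$ (otherwise $f$ would belong to the corresponding clique $C_{q'}$), and the planar-support property states that the subgraph of $\gsup$ induced by $h_f$ is connected. Hence $q_f$ and $q$ lie in a common component of $\gsup\setminus\sep_Q$; the same argument for $f'$ puts $q_{f'}$ and $q$ in a common component, which forces $q_f$ and $q_{f'}$ into the same component and contradicts $f\in A$, $f'\in B$. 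This separation step is the only non-mechanical part of the proof, and it uses the connectedness of each hyperedge $h_f$ in $\gsup$ in an essential way---mere planarity of $\gsup$ on the vertex set $Q$ would not be enough.
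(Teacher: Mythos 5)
Your proposal is correct and follows essentially the same route as the paper's proof: assign each object a representative point to encode balance as node costs, apply the cost-weighted Planar Separator Theorem to $\gsup$, lift the separator nodes to the cliques $C_q$, and use the connectedness of each hyperedge in the planar support to rule out arcs between $A$ and $B$. The only cosmetic difference is that you observe no point of $h_f$ at all can lie in the separator, whereas the paper argues along a path from $q$ to $q_f$; both arguments are equivalent here.
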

%--------------------------------------------------------------------------------------
\begin{proof}
Let $\sep(\gsup)$ be a separator for $\gsup$ of size $O(\sqrt{|Q|})$, which exists
by the Planar Separator Theorem, and let $A(\gsup)$ and $B(\gsup)$ be the corresponding separated parts. 
To ensure an appropriately balanced separator we
use the cost-balanced version of the Planar Separator Theorem, as stated in the previous section.
For each object~$f\in F$ we give one point $q_f\in Q\cap f$ a cost of~$1/n$ and all
other points cost~0. We call $q_f$ the \emph{representative} of~$f$.
(We assume for simplicity that each $f\in F$ intersects at least one
other object~$f'\in F$, so we can always find a representative.
Objects $f\in F$ not intersecting any other object are singletons in $\ig(F)$ and can
be ignored.)
For a point $q\in Q$, define $C_q$ to be the clique in~$\ig(F)$
consisting of all objects $f\in F$ that contain~$q$. Our clique-based separator $\sep$ for $\ig(F)$
is now defined as $\sep := \{ C_q : q\in \sep(\gsup) \}$, and the two separated parts are
defined as:
$A := \{ f\in F: f\not\in\sep \mbox{ and } q_f\in A(\gsup)  \}$ and
$B := \{ f\in F: f\not\in\sep \mbox{ and } q_f\in B(\gsup)  \}$. 
Clearly, the size of $\sep$ is $O(\sqrt{|Q|})$ and its weight is $O(\sqrt{|Q|}\log n)$. Moreover, $|A|,|B|\leq 2n/3$
because $\sep(\graph^*)$ is balanced with respect to the node costs.

We claim there are no arcs in $\ig(F)$ between a node in $A$ and a node in $B$.
Suppose for a contradiction that there are intersecting objects~$f,f'$ such that
$f\in A$ and $f'\in B$. By definition of~$Q$ there is a point $q\in Q$ that lies
in~$f\cap f'$. By the planar-support property, the hyperedge~$h_f$
induces a connected subgraph of~$\gsup$, so there is a path $\pi$ that connects
$q$ to the representative~$q_f$ and such that all nodes of~$\pi$ are points in~$f\cap Q$.
No node on the path~$\pi$ can be in~$\sep(\gsup)$, otherwise $f$ is in a clique that was added to~$\sep$.
Similarly, there is a path $\pi'$ connecting~$q_{f'}$ to $q$ such that no point on~$\pi'$
is in~$\sep(\gsup)$. But then there is a path from $q_f$ to $q_{f'}$ in $\gsup$ after the
removal of $\sep(\gsup)$. Hence, $q_f$ and $q_{f'}$ are in the same part of the partition,
which contradicts that $f$ and $f'$ are in different parts.

We conclude that $\sep$ is a clique-based separator with the desired properties.
\end{proof}
%--------------------------------------------------------------------------------------
\begin{remark}
The witness set~$Q$ in the previous section stabs
all pairwise intersections of objects in the map graph, and so $P\cup Q$
stabs all pairwise intersections as well. $P\cap Q$ has planar support, so we can
get a separator for map graphs using Lemma~\ref{le:planar-support-to-sep}.
Its weight would be $O(\sqrt{n}\log n)$, however, while in the previous section
we managed to get $O(\sqrt{n})$ weight. 
\end{remark}

%--------------------------------------------------------------------------------------
\mypara{Polygonal pseudo-disks.}
%--------------------------------------------------------------------------------------
We now apply Lemma~\ref{le:planar-support-to-sep} to obtain a clique-based separator
for a set~$F$ of polygonal pseudo-disks. To this end, let $Q$ be the set of vertices 
of the pseudo-disks in~$F$. Observe that whenever two pseudo-disks intersect, one must have a vertex 
inside the other. Indeed, either one pseudo-disk is entirely inside the other, 
or an edge~$e$ of $f$ intersects an edge~$e'$ of $f'$. In the latter case, one of the two edges ends 
inside the other pseudo-disk, otherwise there are three intersections between the boundaries.
Furthermore, pseudo-disks have the \emph{non-piercing} property:
$f\setminus f'$ is connected for any two pseudo-disks~$f,f'$.
Raman and Ray~\cite{RamanRay20} proved\footnote{Raman and Ray assume the sets $F$ and $Q$
defining the hypergraph are in general position. Therefore we first slightly perturb
the pseudo-disks in $F$ to get them into general position (while keeping the same intersection graph),
then we take $Q$ to be a point set coinciding with the vertex set of $F$, and then we
slightly move the points in $Q$ such that the hypergraph remains the same.}
that the hypergraph~$\myH_Q(F)$ of a set of non-piercing regions has a planar support
for any set $Q$, so in particular for the set~$Q$ just defined.
We can thus apply Lemma~\ref{le:planar-support-to-sep} to compute a clique-based
separator for~$\ig(F)$. The time to compute the separator is dominated by 
the computation of the planar support, which takes $O(n^3)$ time~\cite{RamanRay20}.
%--------------------------------------------------------------------------------------
\begin{theorem}\label{th:pseudo-square-separator}
Let $F$ be a set of $n$ polygonal pseudo-disks in the plane with $O(n)$ vertices in total. 
Then the intersection graph $\ig(F)$ has a clique-based balanced separator of 
size~$O(\sqrt{n})$ and weight~$O(\sqrt{n}\log n)$, which can be found in $O(n^3)$ time.
\end{theorem}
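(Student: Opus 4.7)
The plan is to invoke Lemma~\ref{le:planar-support-to-sep} with a carefully chosen stabbing set~$Q$, and the main task is to identify such a $Q$ of linear size for which the hypergraph~$\myH_Q(F)$ admits a planar support. The natural candidate is to let $Q$ be the set of all vertices of the polygons in~$F$, which has size $O(n)$ by assumption. Then I would verify the stabbing property: if two polygonal pseudo-disks $f,f'\in F$ intersect, then either one is contained in the other, in which case all vertices of the inner one lie inside the outer one, or some edge~$e\subset\bd f$ crosses some edge~$e'\subset\bd f'$. In the crossing case at least one endpoint of $e$ or $e'$ must lie in $f\cap f'$, for otherwise the two edges would force $\bd f\cap\bd f'$ to contain at least three points, contradicting the pseudo-disk property.

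Given such a $Q$, the next step is to produce a planar support for~$\myH_Q(F)$. Here I would appeal to the non-piercing property of pseudo-disks (that is, $f\setminus f'$ is connected for every pair $f,f'\in F$) together with the theorem of Raman and Ray~\cite{RamanRay20}, which guarantees that the hypergraph defined by any set of non-piercing regions and any stabbing point set has a planar support, computable in $O(n^3)$ time. A minor technical nuisance is the general-position assumption in~\cite{RamanRay20}, which I would handle as indicated in the preceding paragraph of the excerpt: first perturb the pseudo-disks in $F$ infinitesimally so that they are in general position without altering the intersection graph, then take $Q$ to be a point set coinciding with (a slight perturbation of) the vertex set of~$F$, chosen so that $\myH_Q(F)$ is preserved.

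With a planar support~$\gsup$ of $\myH_Q(F)$ in hand, Lemma~\ref{le:planar-support-to-sep} directly yields a clique-based balanced separator of size $O(\sqrt{|Q|})=O(\sqrt{n})$ and weight $O(\sqrt{|Q|}\log n)=O(\sqrt{n}\log n)$, as required. For the running time, collecting the vertex set takes $O(n)$ time, the Planar Separator Theorem on $\gsup$ runs in $O(|Q|)=O(n)$ time, and the bottleneck is the $O(n^3)$ construction of the planar support from~\cite{RamanRay20}, giving the claimed total time.

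I do not anticipate a genuinely hard step: once one recognizes that polygon vertices simultaneously furnish a linear-sized stabbing set \emph{and} fit the Raman--Ray framework, everything else is a direct consequence of Lemma~\ref{le:planar-support-to-sep}. The only point requiring some care is the stabbing argument for intersecting pseudo-disks together with the general-position reduction, and these are both routine given the pseudo-disk property.
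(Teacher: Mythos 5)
Your proposal is correct and follows essentially the same route as the paper: taking $Q$ to be the polygon vertices, arguing the stabbing property via the two-crossing bound, invoking the Raman--Ray planar support for non-piercing regions (with the same perturbation fix for general position), and finishing with Lemma~\ref{le:planar-support-to-sep}. Nothing is missing.
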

%--------------------------------------------------------------------------------------

%--------------------------------------------------------------------------------------
\mypara{Arbitrary pseudo-disks.}
%--------------------------------------------------------------------------------------
To construct a clique-based separator using Lemma~\ref{le:planar-support-to-sep}
we need a small point set $Q$ that stabs all pairwise intersections. Unfortunately,
for general pseudo-disks a linear-size set $Q$ that stabs all intersections
need not exist: there is a collection of $n$ disks such that stabbing
all pairwise intersections requires $\Omega(n^{4/3})$ points. (Such a collection can be derived
from a construction with $n$ lines and $n$ points with~$\Omega(n^{4/3})$ incidences~\cite{PS-incidences}.)
Hence, we need some more work before we can apply Lemma~\ref{le:planar-support-to-sep}.
\medskip

Our separator result for arbitrary pseudo-disks works in a more general setting, namely 
for sets from a family $\F$ with linear union complexity. (We say that $\F$ has 
union complexity $U(n)$ if, for any $n\geq 1$ and any subset $F\subset \F$ of size $n$, 
the union complexity of $F$ is~$U(n)$.) Recall that the union complexity of a family
of pseudo-disks is $O(n)$~\cite{KLPS-pseudo-disk-union}.
The next theorem states that such sets admit 
a clique-based separator of sublinear weight. Note that the bound only
depends on the number of objects, not on their complexity.
%--------------------------------------------------------------------------------------
\begin{theorem}\label{th:linear-union-separator}
Let $F$ be a set of $n$ objects from a family $\F$ of union complexity~$U(n)$, where $U(n)\geq n$.
Then $\ig(F)$ has a clique-based separator of size~$O((U(n))^{2/3})$ and weight~$O((U(n))^{2/3} \log n)$.
In particular, if $F$ is a set of pseudo-disks then $\ig(F)$ has a clique-based separator of 
size~$O(n^{2/3})$ and weight~$O(n^{2/3} \log n)$. The separator can be computed in
$O(n^3)$ time, assuming the total complexity of the objects is~$O(n)$.
\end{theorem}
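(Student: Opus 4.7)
The plan is to invoke Lemma~\ref{le:planar-support-to-sep}: it suffices to produce a stabbing point set $Q$ of size $O((U(n))^{4/3})$ such that the hypergraph $\myH_Q(F)$ admits a planar support, since then the lemma gives a separator of size $O(\sqrt{|Q|}) = O((U(n))^{2/3})$ and weight $O((U(n))^{2/3}\log n)$. The target $|Q|=O((U(n))^{4/3})$ is consistent with the $\Omega(n^{4/3})$ lower bound recalled just before the statement: even for disks, one cannot in general stab all pairwise intersections with asymptotically fewer points.

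Constructing $Q$ of this size is the main step, and the key idea is a random-sampling / cutting construction driven by the linear union complexity. Draw a random sample $R\subseteq F$ of size $r$; the union $\bigcup R$ has at most $U(r)$ boundary vertices, which we add to $Q$. Decomposing the plane via $\bigcup R$ yields $O(U(r))$ cells, and by a standard Clarkson--Shor argument each cell is intersected by $O((n/r)\log(n/r))$ members of $F\setminus R$ in expectation. We then recurse on the conflict list inside each cell. Choosing $r$ so that the recurrence balances---which forces $r$ to be roughly $(U(n))^{1/3}$---yields the bound $|Q|=O((U(n))^{4/3})$.

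For the planar-support condition the pseudo-disk case follows at once from the theorem of Raman and Ray~\cite{RamanRay20} (after a tiny perturbation to put the instance in general position), exactly as in the polygonal case treated in Theorem~\ref{th:pseudo-square-separator}. For a general family $\F$ of union complexity $U(n)$, the construction of $Q$ must be performed hierarchically so that the points added for each $f\in F$ sit in a connected union of cells across the recursion levels; then the induced subgraph on $h_f$ is connected and the local planar supports inside each cell can be glued through the sample-vertex stabbing points into a global planar support.

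The main obstacle is making the $|Q|=O((U(n))^{4/3})$ bound airtight: the Clarkson--Shor sampling estimate has to interact cleanly with the union-complexity function $U$ at each recursion level, and the recursion must be set up so that the children's $U$-values sum appropriately. Once $Q$ and the planar support are in hand, the claimed $O(n^3)$ running time is dominated by the Raman--Ray planar-support computation, with the separator in the support graph computed afterwards in linear time via the Planar Separator Theorem, exactly as in Lemma~\ref{le:planar-support-to-sep}.
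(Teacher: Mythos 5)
Your proposal aims at the right target ($|Q| = O\bigl((U(n))^{4/3}\bigr)$, feed into Lemma~\ref{le:planar-support-to-sep}), but it takes a genuinely different route from the paper's, and that route has a gap that the paper's approach avoids cleanly.

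The paper does not construct a sublinear stabbing set by random sampling / cuttings at all. Instead it first performs a simple \emph{ply-reduction} step: while there exists a point~$p$ contained in more than~$n^{1/3}$ objects, extract that clique~$C_p$ from~$F$ and add it to the separator. This step alone produces $O(n^{2/3})$ cliques of total weight $O(n^{2/3}\log n)$ and, crucially, leaves a residual set $F^*$ whose ply~$k$ is at most~$n^{1/3}$. Now the Clarkson--Shor bound applies directly to the \emph{whole arrangement} of~$F^*$: it has complexity $O(k^2\,U(n^*/k))$, and superadditivity of~$U$ turns this into $O\bigl((U(n))^{4/3}\bigr)$. One point per face of the arrangement is the set~$Q$, and — this is the second key idea you miss — the \emph{dual graph of the arrangement itself} is already a planar support for $\myH_Q(F)$, since every connected region~$f$ covers a connected set of faces. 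No appeal to Raman--Ray is needed at all, which matters because Theorem~\ref{th:linear-union-separator} is stated for arbitrary families of connected regions with linear union complexity, not only for non-piercing (pseudo-disk) families, and Raman--Ray's theorem applies only to the latter.

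Your alternative construction has two concrete problems. First, without the ply-reduction step you do not have a handle on the conflict-list recursion: sampling~$r$ objects gives a union with $O(U(r))$ boundary vertices, but the union itself does not decompose the plane into cells with bounded conflict lists (you'd need a vertical/trapezoidal decomposition), and even granting this, $O(U(r))$ cells each holding $O((n/r)\log(n/r))$ objects means the total work at one level is $\Omega(n\log(n/r))$, so the recursion does not shrink and does not obviously resolve to $O((U(n))^{4/3})$. Second, the step where you ``glue'' local planar supports across recursion levels so that each $h_f$ induces a connected subgraph is precisely the hard part, and nothing you wrote shows how to do it; for a general family~$\F$ with linear union complexity but without the non-piercing property, it is not clear that such a gluing is even possible via Raman--Ray-type arguments. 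The paper sidesteps all of this with the arrangement dual graph. In short: adopt the ply-reduction step, bound the arrangement size by Clarkson--Shor directly, and use the arrangement's dual graph as the planar support — that closes both gaps.
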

%--------------------------------------------------------------------------------------
\begin{proof}
We construct the separator~$\sep$ in two steps.

The first step proceeds as follows. For a point~$p$ in the plane, let $C_p$ denote
the set of objects from the (current) set~$F$ containing~$p$. As long as there is a point~$p$ such that
$|C_p|> n^{1/3}$, we remove $C_p$ from $F$ and put $C_p$ into~$\sep$;
here $n$ refers to the size of the initial set~$F$. Thus the first step
adds $O(n^{2/3})$ cliques to~$\sep$ with total weight~$O(n^{2/3}\log n)$.
This step can easily be implemented in $O(n^3)$ time.

In the second step we have a set $F^*\subseteq F$ of $n^*$ objects with ply~$k$, 
where $n^*\leq n$ and~$k\leq n^{1/3}$. Let $\A(F^*)$ denote the arrangement induced by~$F^*$. 
Since $F^*$ has ply~$k$, the Clarkson-Shor technique~\cite{CW-random-sampling-II}
implies that the complexity of the arrangement~$\A(F^*)$ is~$O(k^2\cdot U(n^*/k))$.
We can compute this arrangement in $O(k^2\cdot U(n^*/k)\log n) = O(n^{2}\log n)$ time~\cite{bcko-cgaa-08}.
Take a point $q$ in each face of the arrangement, and let $Q$ be the resulting set of $O(k^2\cdot U(n^*/k))$ points.
The set~$Q$ stabs all pairwise intersections and the dual graph $\graph^*$ of the arrangement~$\A(F^*)$
is a planar support for the hypergraph~$\myH_Q(F)$. 
% (Recall that the nodes in the dual graph of $\A(F^*)$ correspond to the faces in $\A(F^*)$ 
% and its edges connect pairs of faces sharing an edge of the arrangement.) 
Hence, by Lemma~\ref{le:planar-support-to-sep}
there is a clique-based separator $\sep^*$ for $\ig(F)$ of size~$O\left(k\sqrt{U\left(n^*/k\right)}\right)$
and weight~$O\left(k\sqrt{U\left(n^*/k\right)}\log{n^*}\right)$. Note that $U(n)$ is a superadditive function \cite{union-complexity} which implies that $U(n/k)\leq U(n)/k$ and therefore $k\sqrt{U\left(n^*/k\right)} \leq \sqrt{k ~ U(n)}\leq (U(n))^{2/3}$.
By adding $\sep^*$ to the set~$\sep$ of cliques generated
in the first step, we obtain a clique-based separator with the desired properties.
\end{proof}
%--------------------------------------------------------------------------------------

%--------------------------------------------------------------------------------------
\section{Geodesic disks inside a simple polygon}
\label{sec:geodesic-disks}
%--------------------------------------------------------------------------------------
Let $P$ be a simple polygon. 
We denote the shortest path (or: \emph{geodesic}) in $P$ between two points $p,q\in P$ by $\pi(p,q)$;
note that $\pi(p,q)$ is unique since $P$ is simple.
The \emph{geodesic distance} between $p$ and $q$ is defined to be $\|\pi(p,q)\|$,
where $\|\pi\|$ denotes the Euclidean length of a path~$\pi$.
For a given point~$q\in P$ and radius $r>0$, we call the region $D(q,r) := \{p\in P : \|\pi(p,q)\| \leq r \}$
a \emph{geodesic disk}. 
Let $\D=\{D_1,\ldots,D_n\}$ be a set of geodesic disks in~$P$. 
To construct a clique-based separator for $\ig(\D)$ 
we will show that $\D$ behaves as a set of pseudo-disks so we can apply the result of the previous section.

%--------------------------------------------------------------------------------------
\mypara{The structure of a geodesic disk.}
%--------------------------------------------------------------------------------------
The boundary $\bd D(q,r)$ of a geodesic disk $D(q,r)$ consists of circular arcs lying 
in the interior of $P$ (centered at $q$ or at a reflex vertex of $P$) and parts of the edges of $P$. 
We split $\bd D(q,r)$ into \emph{boundary pieces} 
at the points where the circular arcs meet~$\bd P$. This generates two sets of boundary pieces: 
a set containing the pieces that consist of circular arcs, and a set~$\Gamma(D)$ 
containing the pieces that consist of parts of edges of~$P$. 
An example can be seen in Fig.~\ref{fi:geodesic-disk}.
%--------------------------------------------------------------------------------------
\begin{figure}
\begin{center}
\includegraphics{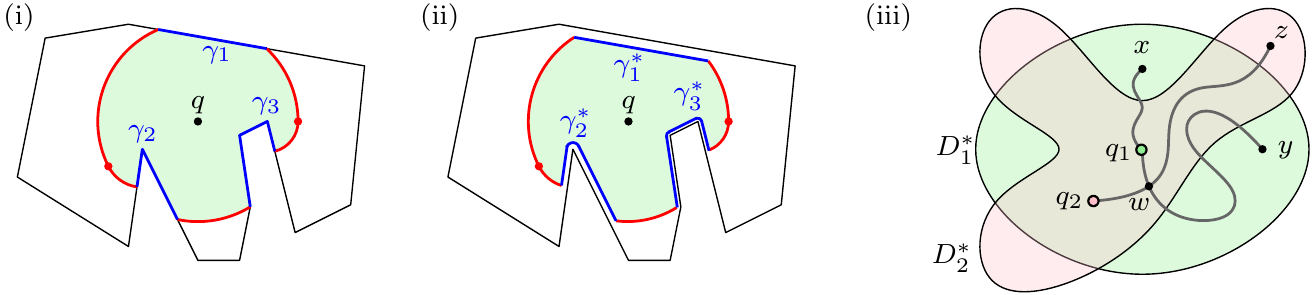}
\end{center}
\caption{(i) A geodesic disk $D$ with center~$q$ and radius~$r$. 
             The set~$\Gamma(D)$ has three pieces, $\gamma_1,\gamma_2$ and $\gamma_3$, shown in blue. 
        (ii) The result of the perturbation. Note that $|\gamma_1|< |\gamma_2|<|\gamma_3|$ 
             and so $\eps_{\gamma_1}>\eps_{\gamma_2}>\eps_{\gamma_3}$.
         (iii) Illustration for the proof of Theorem~\ref{th:geodesic-disks-are-pseudo-disks}.}
\label{fi:geodesic-disk}
\end{figure}
%--------------------------------------------------------------------------------------

A region~$R\subseteq P$ is \emph{geodesically convex} if for any points $p,q\in R$ 
we have $\pi(p,q) \subseteq R$. Pollack~\etal~\cite{Poll89} showed that geodesic disks inside a 
simple polygon are geodesically convex. An immediate consequence is that the intersection of two geodesic disks is connected. 

%--------------------------------------------------------------------------------------
\mypara{Geodesic disks behave as pseudo-disks.}
%--------------------------------------------------------------------------------------
Geodesic disks in a simple polygon are not proper pseudo-disks. For example, if $D_1$ and
$D_2$ are the blue and pink pseudo-disk in the third image in Fig.~\ref{fi:int-graphs},
then $D_1\setminus D_2$ has two components, which is not allowed for pseudo-disks. 
% \skb{Not allowed? Usually boundary overlap itself is not allowed.}
% \mdb{Perhaps, but the other issue is more interesting.}
Nevertheless, we will show that $\D$ behaves as a set of pseudo-disks in the
sense that a small perturbation turns them into pseudo-disks, while keeping the intersection
graph the same.

As a first step in the perturbation, we increase the radius of each geodesic disk~$D_i\in\D$
by some small~$\eps_i$. We pick these $\eps_i$ such that the intersection graph~$\ig(\D)$ stays the same
while all degeneracies disappear. In particular, the boundary pieces of different geodesic disks
have different lengths after this perturbation, and no two geodesic disks touch.
With a slight abuse of notation, we still denote the resulting set of geodesic disks by~$\D$. 

The second step in the perturbation moves each $\gamma\in \cup_{i=1}^{n}\Gamma(D_i)$
into the interior of the polygon over some distance~$\eps_{\gamma}$,
which is smaller than any of the perturbation distances chosen in the first step. 
More formally, for each $\gamma \in \Gamma(D_i)$ we remove
all points from $D_i$ that are at distance less than $\eps_{\gamma}$ from $\gamma$; see Fig.~\ref{fi:geodesic-disk}(ii). 
To ensure this gives a set of pseudo-disks we choose the perturbation distances~$\eps_{\gamma}$ 
% as follows.
% Let $\Gamma(\D):=\cup_{i=1}^{n}\Gamma(D_i)$.
% We sort the pieces in $\Gamma(\D)$ by their Euclidean length in order of increasing size 
% and let $\Gamma(\D)=\{\gamma_i\}_{i=1}^m$ be this ordering.
% Let $\epsilon_1$ be an infinitesimal number. We will perturb $\gamma_1$ by $\epsilon_1$ and 
% for $2 \leq i \leq m$ we will perturb $\gamma_i$ by $\epsilon_i$ such that $\epsilon_i<\epsilon_{i-1}$.
%
% SHORTER:
according to the reverse order of the Euclidean lengths of the pieces. 
That is, if $\|\gamma\| > \| \gamma'\|$ then we pick $\eps_{\gamma} < \eps_{\gamma'}$.
The crucial property of this scheme is that whenever $\gamma_i \subset \gamma_j$ 
then $\gamma_i$ is moved more than $\gamma_j$. 

We denote the perturbed version of $D_i$ by $D_i^*$ and define $\D^* := \{ D^*_i : D_i\in\D\}$.
The perturbed versions have the following important property:
\begin{lemma}
\label{le:connected}
Let $D^*_i,D^*_j\in\D^*$.
Every connected component of $D_i^* \setminus D_j^*$ contains a point $u$ with $u\in D_i \setminus D_j$.
\end{lemma}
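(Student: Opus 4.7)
The plan is to argue by contradiction. Suppose $K$ is a component of $D_i^* \setminus D_j^*$ that contains no point of $D_i \setminus D_j$. Since $K \subseteq D_i^* \subseteq D_i$, this forces $K \subseteq D_j$, and together with $K \cap D_j^* = \emptyset$ it gives $K \subseteq D_j \setminus D_j^* = \bigcup_{\gamma \in \Gamma(D_j)} T_\gamma$, where $T_\gamma := \{p \in D_j : \dist(p,\gamma) < \eps_\gamma\}$ is the strip removed from $D_j$ in step~2. Choosing the $\eps_\gamma$'s small enough makes these strips pairwise disjoint, and then the connectedness of $K$ pins $K$ inside a single strip $T_\gamma$ with $\gamma \in \Gamma(D_j)$.

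Next I would describe the boundary of $T_\gamma$. It splits into an outer part lying on $\partial D_j$---namely $\gamma$ itself together with two short arcs $\delta_{e_1}, \delta_{e_2}$ of $\partial D_j$ leaving the endpoints $e_1, e_2$ of $\gamma$---and an inner part lying on $\partial D_j^*$, consisting of the curve $\alpha$ parallel to $\gamma$ at distance $\eps_\gamma$ together with two quarter-arcs $\beta_{e_1}, \beta_{e_2}$ around the endpoints. Because step~2 has removed every point of $D_i$ within distance $\eps_{\gamma_i}$ of some $\gamma_i \in \Gamma(D_i)$, the closed set $D_i^*$ is disjoint from $\partial P$, so $\overline{K}$ cannot meet $\gamma$. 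The pivotal step, which I call Claim~A, is to show that $\overline{K}$ contains a point $p$ in the relative interior of $\delta_{e_1} \cup \delta_{e_2}$ with $p \in \mathrm{int}(D_i^*)$.

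Claim~A yields the contradiction directly. Pick a small open disc $U$ around such a $p$ that is contained in $\mathrm{int}(D_i^*)$ and in which $\partial D_j$ is a smooth arc. Then $\partial D_j$ splits $U$ into two open half-discs: one lies inside $D_j$ (and hence in $T_\gamma \subseteq D_j \setminus D_j^*$) and the other lies in $P \setminus D_j$ (and hence outside $D_j^*$). Both half-discs are therefore contained in $D_i^* \setminus D_j^*$, and by maximality of $K$ both belong to $K$. But the half-disc outside $D_j$ lies in $D_i^* \setminus D_j \subseteq D_i \setminus D_j$, contradicting the assumption on $K$.

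The hard part is Claim~A: I must rule out that pieces of $\partial D_i^*$ ``plug'' $T_\gamma$ and thereby isolate $K$ from the arcs $\delta_{e_1}, \delta_{e_2}$. The strategy is a case analysis on the interaction between $\gamma$ and the boundary pieces $\gamma_i \in \Gamma(D_i)$, leveraging the perturbation rule that shorter pieces are moved further into the interior. In the central case $\gamma \subsetneq \gamma_i$ for some $\gamma_i$, the rule gives $\eps_{\gamma_i} < \eps_\gamma$, so $D_i^* \cap T_\gamma$ is exactly the ``middle strip'' $\{p \in T_\gamma : \dist(p,\gamma) \geq \eps_{\gamma_i}\}$; since $D_j$ bulges slightly past $\gamma$'s endpoints at positive height, one checks that at each height $h \in (\eps_{\gamma_i},\eps_\gamma)$ the middle strip terminates on $\delta_{e_1}$ and $\delta_{e_2}$ at points whose distance to $\gamma_i$ equals $h \geq \eps_{\gamma_i}$, hence inside $D_i^*$. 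The remaining possibilities---partial overlaps with $|\gamma_i| > |\gamma|$, $D_i$ covering $T_\gamma$ from above without touching $\partial P$ near $\gamma$, or $D_i$ missing $T_\gamma$ entirely---are handled by the same length-comparison: either the ``middle strip'' argument goes through essentially unchanged, or one verifies directly that $D_i^* \cap T_\gamma = \emptyset$, in which case no such $K$ exists and the lemma holds vacuously.
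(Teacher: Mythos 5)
Your proof takes essentially the same approach as the paper's: both confine $K$ to a thin strip near some $\gamma \in \Gamma(D_j)$ and then exploit the perturbation ordering ($|\gamma_i| > |\gamma|$ implies $\eps_{\gamma_i} < \eps_\gamma$) to show $K$ must reach the circular arcs of $\partial D_j$ at the endpoints of $\gamma$ at a point interior to $D_i^*$, whence $K$ extends past $\partial D_j$ into $D_i \setminus D_j$. The only difference is presentational: the paper realizes this by constructing an explicit path at constant distance $\eps \in (\eps_{\gamma_i}, \eps_{\gamma_j})$ from $\partial P$ toward an endpoint of $\gamma_j^*$, while you argue topologically that the ``middle strip'' $D_i^* \cap T_\gamma$ is connected and terminates on $\delta_{e_1}$ or $\delta_{e_2}$---the same geometric fact, with a comparable level of informality in the residual case analysis.
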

%--------------------------------------------------------------------------------------
\begin{proof}
The only case of interest is when there exist $\gamma_i \in \Gamma(D_i)$ and 
$\gamma_j \in \Gamma(D_j)$ such that $\gamma_i \cap \gamma_j \neq \emptyset$.
Suppose now there is a connected component of $D_i^* \setminus D_j^*$ which does not contain 
such a point. Since $D_i^*\subset D_i$ and $D_j^* \subset D_j$, this connected component must 
have been introduced due to the perturbation  (in other words its intersection with all 
connected components of $D_i\setminus D_j$ must be empty). This means that this connected component 
must be the region contained between $\gamma_i^*$ and $\gamma_j^*$  and which was 
previously contained in $D_i \cap D_j$. It is enough to only examine the case when $\gamma_i$ is 
bigger than $\gamma_j$ as otherwise this component only contains points of $D_j^*$. 
There are two cases: $\gamma_j\subset \gamma_i$, or $\gamma_j \not \subset \gamma_i$. 
In both cases note that there will exist an endpoint $p$ of $\gamma_j^*$ that lies inside $D_i^*$. 
Choose now a point $q$ in $(D_j\setminus D_j^*) \setminus (D_i\setminus D_i^*)$ at 
distance $\eps$ from $\gamma_j$ with $\eps_{\gamma_i} < \eps < \eps_{\gamma_j}$. 
See Fig.~\ref{fi:perturbation-proof} for an illustration of the case $\gamma_j\subset \gamma_i$. 
Consider the dashed path starting at $q$ that stays at distance~$\eps$ from $\bd P$
going towards~$p$. Due to the definition of our perturbation this path 
will stay inside $D^*_i$ and remain outside $D_j^*$. Hence, after exiting $D^*_j$
near the point $p$ we reach a point $u \in D_i \setminus D_j$, which is
a contradiction.
\end{proof}
%------------------------------------------------------------------------------------------
\begin{figure}
\begin{center}
\includegraphics{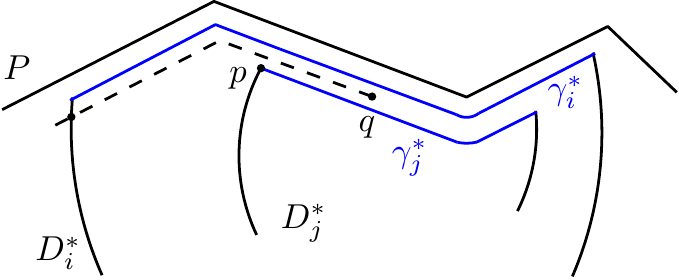}
\end{center}
\caption{Illustration for proof of Lemma \ref{le:connected}.}
\label{fi:perturbation-proof}
\end{figure}
%We can now prove the following theorem.
%--------------------------------------------------------------------------------------
\begin{theorem}\label{th:geodesic-disks-are-pseudo-disks}
Any set $\D$ of geodesic disks inside a simple polygon~$P$ can be slightly perturbed
such that the resulting set $\D^*$ is a set of pseudo-disks with $\ig(\D)=\ig(\D^*)$.
\end{theorem}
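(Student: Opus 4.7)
\smallskip
\noindent\textbf{Proof plan.}

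The statement has two parts: $\ig(\D)=\ig(\D^*)$, and $\D^*$ is a set of pseudo-disks. I would handle the graph-preservation part first, by fixing the two sets of perturbation parameters at very different scales. In step~1 each enlargement $\eps_i$ is taken smaller than the Euclidean gap between any two disjoint $D_i,D_j\in\D$, which prevents new intersections, and is chosen generically so that no two boundary pieces of different disks have the same Euclidean length and so that every remaining intersection is non-degenerate. In step~2 each offset $\eps_\gamma$ is taken strictly smaller than every $\eps_i$; since all surviving intersections now have positive area, none can be destroyed by such tiny contractions, and any tangencies created in step~1 stay as proper overlaps.

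For the pseudo-disk property, I would fix a pair $D^*_i,D^*_j$ and show that $D^*_i\cap D^*_j$ and $D^*_i\setminus D^*_j$ (and, by symmetry, $D^*_j\setminus D^*_i$) are each connected. Together with the general-position coming from step~1 this is equivalent to $\bd D^*_i$ and $\bd D^*_j$ crossing at most twice, which is exactly the pseudo-disk condition. Connectedness of $D^*_i\cap D^*_j$ follows from geodesic convexity of $D_i$ and $D_j$ (hence of $D_i\cap D_j$) together with the smallness of the perturbation.

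The heart of the argument is connectedness of $D^*_i\setminus D^*_j$. Lemma~\ref{le:connected} already tells us that every component of $D^*_i\setminus D^*_j$ contains a witness $u\in D_i\setminus D_j$, so it is enough to join any two such witnesses by a path inside $D^*_i\setminus D^*_j$. When $D_i\setminus D_j$ is itself connected one can take a geodesic path inside $D_i\setminus D_j$ (using the geodesic convexity of $D_i$) and reroute it slightly to avoid the thin boundary strips removed in step~2. The interesting case is when $D_i\setminus D_j$ is disconnected; this can only happen when polygon-edge pieces of $\bd D_j$ sit inside polygon-edge pieces of $\bd D_i$, i.e.\ when $\gamma_j\subseteq\gamma_i$ for some $\gamma_j\in\Gamma(D_j)$ and $\gamma_i\in\Gamma(D_i)$. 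The length-based choice of offsets then gives $\eps_{\gamma_j}>\eps_{\gamma_i}$, so $D^*_j$ retracts farther from $\bd P$ than $D^*_i$ along this piece, and the thin corridor lying between $\gamma_j^*$ and $\bd P$ is contained in $D^*_i\setminus D^*_j$ and bridges the components of $D_i\setminus D_j$ separated by $\gamma_j$.

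The main obstacle I expect is the case analysis needed to show that this bridging really merges \emph{all} components of $D_i\setminus D_j$ into a single component of $D^*_i\setminus D^*_j$, across every configuration in which two geodesic disks share polygon-edge boundary pieces, including configurations near reflex vertices where circular arcs centered at reflex vertices meet $\bd P$ in subtle ways. What makes the analysis go through is precisely the length-ordered choice of $\eps_\gamma$: whenever one disk's polygon-edge piece is nested in another's, the inner piece is pushed inward strictly more, so the bridging corridor is always available on the correct side.
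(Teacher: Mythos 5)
Your plan takes a genuinely different route from the paper, and the difference matters. The paper does \emph{not} try to directly establish connectedness of $D^*_i\setminus D^*_j$ by constructing bridging paths and classifying which boundary configurations can cause a disconnection. Instead it runs a single global contradiction: assume $D^*_1\setminus D^*_2$ has two components; use Lemma~\ref{le:connected} to pick $x,y\in D_1\setminus D_2$ in different components; the curve $\pi(x,q_1)\cup\pi(q_1,y)$ lies in $D_1$ by geodesic convexity and must split $D^*_2$ (one more use of Lemma~\ref{le:connected} shows it cannot go around a component of $D^*_2\setminus D^*_1$); pick $z\in D_2\setminus D_1$ on the side away from $q_2$; then $\pi(q_2,z)$ crosses the curve at some $w$, say on $\pi(q_1,y)$. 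Now $\|\pi(q_1,w)\|+\|\pi(w,y)\|\le r_1$ and $\|\pi(q_2,w)\|+\|\pi(w,z)\|\le r_2$ because these are subpaths of radii-bounded geodesics, while $\|\pi(q_1,w)\|+\|\pi(w,z)\|>r_1$ and $\|\pi(q_2,w)\|+\|\pi(w,y)\|>r_2$ because $z\notin D_1$, $y\notin D_2$. Adding each pair gives the same four-term sum both $\le r_1+r_2$ and $>r_1+r_2$, a contradiction. This one calculation is insensitive to how many components $D_i\setminus D_j$ has and to which boundary pieces are involved, which is exactly what lets the paper avoid the case analysis you are anticipating.

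The gap in your proposal is the structural claim you flag yourself: that $D_i\setminus D_j$ can only be disconnected when some $\gamma_j\in\Gamma(D_j)$ is nested inside some $\gamma_i\in\Gamma(D_i)$. This is asserted without proof, and it is not obviously correct as stated. A piece $\gamma_j$ might lie entirely in the interior of $D_i$ with no $\gamma_i$ around it, two polygon-edge pieces might partially overlap rather than nest, and the circular arcs of $\bd D_i$ and $\bd D_j$ (each a union of arcs centered at the center or at various reflex vertices) could a priori contribute extra crossings away from $\bd P$ that have to be excluded separately. Proving a clean classification of all disconnection patterns is essentially as hard as the theorem itself, and would likely end up re-deriving the paper's metric inequality anyway. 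Your other ingredients --- scale-separated perturbation parameters to keep $\ig(\D)=\ig(\D^*)$, connectedness of $D^*_i\cap D^*_j$ from geodesic convexity, the role of Lemma~\ref{le:connected} as the supplier of witnesses in $D_i\setminus D_j$, and the length-ordered choice of the $\eps_\gamma$ --- all match the paper; the recommendation is to keep those and replace the path-construction and nesting case analysis with the triangle-inequality contradiction above.
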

%--------------------------------------------------------------------------------------
\begin{proof}
% I MOVED THE RADIUS INCREASE EARLIER, TO ENSURE THAT BOUNDARY PIECES CANNOT HAVE THE SAME LENGTH
% SO THE FOLLOWING CAN BE REMOVED:
% In order to ensure that we don't remove any of the existent intersections between objects in $\D$ we can slightly increase the radii of some of the disks before applying the perturbation. The only case where this is necessary is for every two disks $D_i,D_j$ for which there exist $\gamma_1\{D_i\}, \gamma_2\{D_j\}$ such that $D_i \cap D_j= \gamma_1\{D_i\}\cap \gamma_2\{D_j\}$  and this has zero length. 
%
Consider the set $\D^*$ resulting from the perturbation described above.
Suppose for a contradiction that there exist two objects $D_1^*,D_2^*\in \D^*$ such that
$\bd D^*_1$ and $\bd D^*_2$ cross four or more times.
% Note that after applying the perturbations, all crossings are proper.
Recall that the intersection of two geodesic disks is connected. This property
is not invalidated by the perturbation. Hence, if $\bd D^*_1$ and $\bd D^*_2$ cross four or more times
then $D^*_1\setminus D^*_2$ (and, similarly, $D^*_2\setminus D^*_1$) has two or more components.

For $i=1,2$, let $q_i$ and $r_i$ denote the center and radius of $D_i$. % (after the first step of the perturbation).
Without loss of generality assume that $r_1\leq r_2$.
Let $x$ and $y$ be points in different components of $D^*_1\setminus D^*_2$; 
see Fig.~\ref{fi:geodesic-disk}~(iii).
By Lemma \ref{le:connected} we can pick $x$ and $y$ such that $x,y \in D_1\setminus D_2$. 
By concatenating the geodesics $\pi(x,q_1)$ and $\pi(q_1,y)$ we obtain a curve that splits
$D_2^*$ into at least two parts---this is independent of where $q_1$ lies, or whether 
$\pi(x,q_1)$ and $\pi(q_1,y)$ partially overlap.
(Note that these geodesics lie in $D_1$ but not necessarily in $D^*_1$. However, they
cannot ``go around'' a component of $D_2^*\setminus D^*_1$, because 
$D_1$ cannot fully contain such a component by Lemma \ref{le:connected}. Hence,
$\pi(x,q_1)\cup\pi(q_1,y)$ must indeed go through $D^*_2$.)
Not all components of $D^*_2\setminus D^*_1$
can belong to the same part, otherwise $x$ and $y$ would not be in different components
of  $D^*_1\setminus D^*_2$. Take a point~$z\in D^*_2\setminus D^*_1$ that lies
in a different part than $q_2$, the center of~$D_2$. 
Again by Lemma \ref{le:connected} we can pick $z$ such that $z \in D_2\setminus D_1$.  
Then the geodesic $\pi(q_2,z)$ must cross $\pi(x,q_1) \cup \pi(q_1,y)$, say at a
point~$w \in \pi(q_1,y)$. Since $z\not\in D_1$ and $y\not\in D_2$ we must have 
$\|\pi(q_1,w)\cup\pi(w,z)\| + \|\pi(q_2,w)\cup\pi(w,y)\| > r_1 + r_2$. But this gives a contradiction 
because $y\in D_1$ and $z\in D_2 $ implies
$\|\pi(q_1,w)\cup\pi(w,y)\|+\|\pi(q_2,w)\cup\pi(w,z)\| \leq r_1+r_2$.

It remains to show that $\ig(\D)=\ig(\D^*)$. As mentioned earlier, the increase of the radii 
in the first step of the perturbation is chosen sufficiently small so that no new intersections are introduced.
The second step shrinks the geodesic disks, so no
new intersections are introduced in that step either. Finally, the fact that the perturbations in the second step
are smaller than in the first step guarantees that no intersections are removed. 
\end{proof}
%--------------------------------------------------------------------------------------
Theorem~\ref{th:geodesic-disks-are-pseudo-disks} allows us to apply Theorem~\ref{th:linear-union-separator}.
When doing so, we actually do not need to perturb the geodesic disks. We only use the perturbation to argue that the
number of faces in the arrangement defined by $n$ geodesic disks of ply~$k$ is $O(nk)$.
Computing the geodesic disks (and then computing the separator) can be done in polynomial time
in $n$ and the number of vertices of~$P$.
We obtain the following result.
%--------------------------------------------------------------------------------------
\begin{corollary}\label{col:geodesic-disk-separator}
Let $\D$ be a set of $n$ geodesic disks inside a simple polygon with $m$ vertices.
Then $\ig(\D)$ has a clique-based separator of size~$O(n^{2/3})$ and weight~$O(n^{2/3} \log n)$,
which can be computed in time polynomial in $n$ and $m$.
\end{corollary}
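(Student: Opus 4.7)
The plan is to chain together Theorem~\ref{th:geodesic-disks-are-pseudo-disks} and Theorem~\ref{th:linear-union-separator}. The perturbed set $\D^*$ is a collection of pseudo-disks with union complexity $U(n)=O(n)$ by the result of Kedem~\etal~\cite{KLPS-pseudo-disk-union}, so Theorem~\ref{th:linear-union-separator} applied to $\D^*$ immediately produces a clique-based separator of size $O(n^{2/3})$ and weight $O(n^{2/3}\log n)$ for $\ig(\D^*)$. Since $\ig(\D)=\ig(\D^*)$, the same collection of cliques is a valid clique-based separator for $\ig(\D)$, which yields the combinatorial part of the corollary.

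For the algorithmic part, I would argue that the perturbation never needs to be carried out explicitly. The only place the pseudo-disk structure is used in the proof of Theorem~\ref{th:linear-union-separator} is to bound the complexity of the arrangement of a low-ply subset via the Clarkson--Shor technique; since this estimate depends only on the union-complexity function $U(\cdot)$, and since $U(\cdot)$ is preserved by an intersection-graph-preserving perturbation, the arrangement of the original geodesic disks of ply~$k$ still has complexity $O(nk)$. Hence the algorithm underlying Theorem~\ref{th:linear-union-separator} can be executed directly on $\D$, with the faces of the arrangement of $\D$ supplying the stabbing set $Q$ and the dual graph of the arrangement supplying the planar support.

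What remains is to verify that each subroutine runs in time polynomial in $n$ and $m$. Each geodesic disk $D(q_i,r_i)$ can be extracted from the shortest-path map of $P$ rooted at $q_i$ in time polynomial in $m$, and has combinatorial complexity $O(m)$. Consequently the arrangement computations, the detection and removal of points of depth exceeding $n^{1/3}$ in the first phase of the construction, the computation of the planar support for the low-ply remainder, and the subsequent invocation of the Planar Separator Theorem, are all polynomial in $n$ and $m$. The main obstacle, such as it is, is conceptual rather than technical: one must be comfortable asserting that the Clarkson--Shor bound is driven solely by $U(\cdot)$ and therefore survives the passage from $\D^*$ back to $\D$, so that the perturbation of Theorem~\ref{th:geodesic-disks-are-pseudo-disks} plays a purely structural role and leaves no computational footprint.
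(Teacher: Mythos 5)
Your proposal matches the paper's approach: both chain Theorem~\ref{th:geodesic-disks-are-pseudo-disks} with Theorem~\ref{th:linear-union-separator}, and both observe that the perturbation is a purely conceptual device whose only role is to justify the $O(nk)$ bound on the arrangement of the $n$ geodesic disks of ply~$k$, so the algorithm can be run on $\D$ directly. The one place your phrasing is looser than it should be is the assertion that ``$U(\cdot)$ is preserved by an intersection-graph-preserving perturbation'': this is not a general principle (the original $\D$ is not even a pseudo-disk family, so the Kedem~\etal\ bound does not apply to it verbatim); the correct justification is that the specific infinitesimal shrinking in the second perturbation step only separates coincident boundary pieces and hence cannot decrease arrangement complexity, so the $O(nk)$ face bound established for $\D^*$ also holds for $\D$ — but the paper is equally terse on this point, so the two arguments are at the same level of rigor.
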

%--------------------------------------------------------------------------------------
%--------------------------------------------------------------------------------------
\section{Visibility-restricted unit-disk graphs inside a polygon}
\label{sec:visibility}
%--------------------------------------------------------------------------------------

Let $P$ be a simple polygon, possibly with holes, and let $Q$ be a set of $n$ points inside~$P$.
We define $\gvis(Q)$ to be the visibility-restricted unit-disk graph of $Q$. The nodes
in $\gvis(Q)$ correspond to the points in~$Q$ and there is an edge
between two points $p,q\in Q$ iff $|pq|\leq 1$ and $p$ and $q$ see each other. 
A vertex of $P$ is \emph{reflex} if its angle within the polygon is more than 180~degrees; note that for a vertex of a hole we look at the angle within $P$, not within the hole. Our separator construction will make use of the notion of a \emph{centerpoint of $P$} which we introduce below.

\subsection{Centerpoint of a simple polygon}
\label{app:centerpoint}
%--------------------------------------------------------------------------------------
This is a natural extension of the notion of the centerpoint for 
a set $Q$ of $n$ points in the plane, which is a point~$p$ (not necessarily from $Q$) such that
any line through~$p$ divides the plane in two half-planes each containing at most $2n/3$ 
of the points from~$Q$. Recall that a \emph{chord} in $P$ is a line segment $s\subset P$ that connects
two points on the boundary of~$P$. Since we consider $P$ to be a closed set,
a chord may pass through one or more reflex vertices of~$P$.
We say that a chord is \emph{maximal} if it cannot be extended without exiting the polygon.
We define a \emph{half-polygon} of $P$ to be a sub-polygon of $P$ that is bounded by
a (not necessarily maximal) chord~$s$ and a portion $\gamma\subset \bd P$ of the boundary of $P$
such that $s\cap \gamma=\{z_1,z_2\}$, where $z_1,z_2$ are the endpoints of~$s$.
Note that any chord splits $P$ into two or more half-polygons.
%--------------------------------------------------------------------------------------
\begin{definition}[Centerpoint in a Simple Polygon]
Let $P$ be a simple polygon and $Q$ be a set of $n$ points in $P$. 
A \emph{centerpoint for $Q$ in $P$} is a point $p\in P$ such that any maximal chord through $p$ 
splits $P$ into half-polygons that each contain at most $2n/3$ points from $Q$ in their interior. 
\end{definition}
%--------------------------------------------------------------------------------------
%--------------------------------------------------------------------------------------
\begin{theorem} \label{th:centerpoint}
For any simple polygon $P$ and point set $Q\subset P$,  a centerpoint for $Q$ in $P$ exists.
% \mdb{time to find it?}
\end{theorem}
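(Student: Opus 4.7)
The plan is to mimic the classical centerpoint theorem's Helly-based proof, substituting half-polygons for half-planes and geodesic convexity for Euclidean convexity. First I would verify that every half-polygon $H$ of $P$, bounded by a chord $s$ and a boundary arc $\gamma$, is geodesically convex: if a geodesic $\pi$ in $P$ between two points of $H$ exited $H$, it would cross $s$ at two points $c_1,c_2$, and replacing the outside sub-path by the segment from $c_1$ to $c_2$ along $s\subset\bar H$ would yield a path in $P$ no longer than $\pi$, contradicting its shortest-path property.

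Next I would let $\mathcal{F}$ denote the family of \emph{heavy} half-polygons, i.e., those $H$ with $|Q\cap \mathrm{int}(H)|>2n/3$, and prove by inclusion--exclusion that any three members $H_1,H_2,H_3\in\mathcal{F}$ share a common point: if $H_1\cap H_2\cap H_3=\emptyset$, then $\bigcup_i(P\setminus H_i)$ covers $P$ and hence $Q$, yet each complement contains fewer than $n/3$ points of $Q$, giving $|Q|<n$, a contradiction. Appealing to a Helly-type theorem for geodesically convex subsets of a simple polygon (Helly number~$3$, following from a corresponding Radon-type statement for geodesic convexity in simply connected planar domains), I would conclude that $\mathcal{F}$ has a common point $p^*\in P$.

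It then remains to show that $p^*$ is a centerpoint. If not, some maximal chord $s^*$ through $p^*$ yields a heavy component $H_0$ of $P\setminus s^*$, so $H_0\in\mathcal{F}$ and $p^*\in H_0$; since $p^*\in s^*$ precludes $p^*\in\mathrm{int}(H_0)$, the point $p^*$ must lie on the sub-chord $s_0\subseteq s^*$ bounding $H_0$. I would then perturb by translating $s_0$ a small distance $\varepsilon>0$ into $\mathrm{int}(H_0)$ to obtain a nearby chord $s_0'$; for $\varepsilon$ small enough (and $Q$ in general position, which we can assume after an infinitesimal perturbation of the points), the resulting half-polygon $H_0'$ on the $H_0$ side of $s_0'$ still satisfies $|Q\cap\mathrm{int}(H_0')|>2n/3$, yet $p^*$ now lies on the opposite side of $s_0'$, so $p^*\notin H_0'$, contradicting $p^*\in\bigcap\mathcal{F}$.

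The main obstacle will be justifying the Helly-type theorem for geodesically convex sets in a simple polygon, since half-polygons are not Euclidean-convex and hence the classical planar Helly theorem does not apply directly. A secondary subtlety is the case $p^*\in\partial P$: if $p^*$ lies on the boundary arc of a heavy half-polygon rather than on its chord, the perturbation step must be replaced by translating \emph{some nearby} chord through or near $p^*$ in a direction chosen according to the local boundary geometry at $p^*$, so that heaviness is preserved while $p^*$ is moved strictly outside.
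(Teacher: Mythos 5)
Your proposal has the same skeleton as the paper's proof: form the family of ``heavy'' half-polygons (interior containing more than $2n/3$ points of $Q$), show by counting that any three of them meet, invoke a Helly-type statement to get a common point, and finish by shifting the bounding sub-chord infinitesimally inward to show the common point is a centerpoint. The one place you diverge is precisely the step you leave open: you appeal to a Helly theorem with Helly number~3 for geodesically convex subsets of a simple polygon, to be derived from a Radon-type statement that you do not prove. The paper closes this step instead by citing Molnar's theorem: if a family of simply connected compact sets in the plane is such that every two members have a connected intersection and every three members have a common point, then the whole family has a common point. Your geodesic-convexity lemma for half-polygons is exactly what verifies the pairwise hypothesis there (half-polygons are compact and simply connected, and the intersection of two geodesically convex sets is geodesically convex, hence connected), so your ``main obstacle'' dissolves into a citation rather than requiring Radon/Helly machinery for geodesic convexity.

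The concluding perturbation argument coincides with the paper's, and your secondary worry about $p^*\in\partial P$ is unnecessary: the boundary arc of a half-polygon meets the maximal chord only at the endpoints of its bounding sub-chord, so any point of the chord that belongs to the heavy half-polygon $H_0$ must lie on $s_0$ itself. Hence translating $s_0$ slightly into $H_0$ expels $p^*$ while retaining the finitely many points of $Q$ in the interior, exactly as in the paper; no case distinction based on the local boundary geometry at $p^*$ is needed.
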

%--------------------------------------------------------------------------------------
\begin{proof}
To prove the theorem we can follow the standard proof~\cite{m-lecture-notes} of the existence of a centerpoint in the plane
almost verbatim.
Thus, consider the family $H$ of all half-polygons of $P$ whose interior contains more than $2n/3$ points of~$Q$. 
Any three of these half-polygons must intersect, because the complement of each of them contains 
less than $n/3$ points from~$Q$. Recall that Helly's Theorem states that if we have a family
of convex sets in the plane such that any three of them intersect, then they all intersect in
a common point. While our half-polygons are not convex, the same result is true here, as follows
from Molnar's Theorem ~\cite{Breen98} which states that if a family of simply connected compact sets 
in the plane is such that any two members have a connected intersection and any three members 
have a non-empty intersection, then the intersection of the family is non-empty. 
Our family $H$ satisfies these conditions and hence $\bigcap H \neq \emptyset$.

We claim that any point $c$ in the common intersection of the half-polygons in $H$ is a centerpoint
for $Q$ in $P$. To see this take a chord~$s$ through $c$ and let $H_1,H_2,\ldots$ be the
half-polygons into which $s$ splits~$P$. Suppose for contradiction that some $H_i$ contains more than $2n/3$ points of $Q$. 
Let $s_i\subseteq s$ be the part of~$s$ bounding~$H_i$.
Then, if we slightly shrink $H_i$ by moving~$s_i$ infinitesimally, we obtain a half-polygon~$H'_i$
that contains more than~$2n/3$ vertices, contradicting the definition of~$c$.
\end{proof}

%--------------------------------------------------------------------------------------
Our main result is the following:
%--------------------------------------------------------------------------------------
\begin{theorem}\label{th:vis-separator}
Let $Q$ be a set of $n$ points inside a polygon (possibly with holes) with $r$ reflex vertices.
Then $\gvis(Q)$ admits a clique-based separator of size~$O(\min(n,r)+\sqrt{n})$ and 
weight~$O(\min(n,r\log(n/r)+\sqrt{n}))$.
%  MdB: Computation time to be added in full version
The bounds on the size and weight of the separator are tight in the worst case,
even for simple polygons.
\end{theorem}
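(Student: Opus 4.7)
The plan is to find a balanced separator via centerpoint-based chord decomposition, then cover the points whose visibility arcs may cross the chord by cliques derived from a bounded-diameter cell structure. First, I would apply Theorem~\ref{th:centerpoint} to obtain a centerpoint $c$ for $Q$ inside $P$ and choose a maximal chord~$s$ through~$c$; by the centerpoint property, the half-polygons on either side of~$s$ each contain at most $2n/3$ points of~$Q$. Since every arc of $\gvis(Q)$ has length at most~$1$, any arc crossing $s$ has both endpoints inside the closed Euclidean strip of width~$2$ centered on $s$; let $Q_s$ be the subset of $Q$ lying in this strip whose visibility region meets~$s$. A clique cover of $Q_s$ then yields a valid balanced separator of $\gvis(Q)$.

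To bound the weight of such a cover, I would tile the $1$-neighborhood of~$s$ by cells of diameter at most~$1$ arranged along~$s$. In portions of the strip free of reflex vertices, all points of $Q_s$ lying in one such cell are mutually visible (their connecting segments stay in $P$) and within distance~$1$, hence form a clique of $\gvis(Q)$; a standard grid refinement for plain unit-disk graphs then bounds this contribution by $O(\sqrt{n})$ weight, matching what we get in the obstacle-free case.

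The main obstacle is the contribution of reflex vertices, as a reflex vertex $v$ near $s$ can carve a cell into several incomparable visibility classes and a single clique per cell no longer suffices. I plan to handle this by performing, for each reflex vertex~$v$ influencing visibility across~$s$, a balanced hierarchical subdivision of the points whose line of sight to~$s$ is controlled by~$v$. Because we only need to refine the partition over the relevant point scales, rather than at all $\log n$ scales, each reflex vertex can be charged $O(\log(n/r))$ cliques, for a total of $O(r\log(n/r))$ additional weight on top of the grid contribution. When the quantity $r\log(n/r)+\sqrt{n}$ exceeds $n$ we simply take the trivial separator of $n$ singleton cliques of total weight $n$, explaining the $\min(n,\cdot)$ statement in the theorem.

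For the matching lower bound I would construct a simple polygon with $r$ carefully placed reflex vertices that realize, inside $\gvis(Q)$, $r$ independent gadgets on $\Theta(n/r)$ points each, arranged so that no clique of $\gvis(Q)$ spans more than a single gadget and each gadget alone forces $\Omega(\log(n/r))$ weight in any balanced clique-based separator; summing over the $r$ gadgets yields the $\Omega(r\log(n/r))$ term. Embedding a $\sqrt{n}\times\sqrt{n}$ unit-disk grid in a reflex-free portion of the polygon then contributes the standard $\Omega(\sqrt{n})$ term, matching the upper bound. The hardest part of the whole argument is the bookkeeping for reflex vertices in step three, namely showing that each vertex really only costs $\log(n/r)$ rather than $\log n$ while still separating every crossing arc.
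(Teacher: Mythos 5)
Your plan follows the paper's general skeleton (centerpoint, a chord, cliques near the chord, a reflex-vertex charge, and the $\min(n,\cdot)$ fallback), but two essential steps are missing or wrong. First, a \emph{single} maximal chord $s$ through the centerpoint does not suffice: even with no reflex vertices at all, the points of $Q$ could be spaced just over distance $1$ apart along $s$ itself, so every clique in your cover of the strip is a singleton and the cover has weight $\Theta(n)$, not $O(\sqrt n)$; no ``standard grid refinement'' can repair a bad chord choice. The paper's proof generates $\sqrt{n}$ candidate chords through the centerpoint (via a $\sqrt{n}\times\sqrt{n}$ grid) and proves, by a column-by-column counting and AM--GM argument, that the \emph{total} weight of the clique sets over all candidates is $O(n)$, so some chord costs only $O(\sqrt n)$; this selection-by-averaging is the heart of the upper bound and is absent from your proposal. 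Second, your treatment of reflex vertices is unsubstantiated and in the wrong order. The paper first removes, globally, every point that sees a reflex vertex within distance $\sqrt 2$, grouping these points by the geodesic Voronoi cells of the reflex vertices; each cell intersected with a disk of radius $\sqrt2$ splits into $O(1)$ convex pieces (Lemma~\ref{le:split-Rstar}), giving $O(1)$ cliques per reflex vertex and total weight $O(r\log(n/r))$. That removal is exactly what makes the later per-cell claim true: two mutually invisible points in a subcell of diameter $<1$ near the chord force a visible reflex vertex within distance $\sqrt2$ (Lemma~\ref{le:inside-grid}). Your ``balanced hierarchical subdivision of the points controlled by $v$'' with an $O(\log(n/r))$ charge per vertex has no supporting mechanism, and it conflates the number of cliques with their weight. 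Also, for polygons with holes a chord through a centerpoint need not split $P$ into two balanced parts; the paper switches to full lines through an ordinary planar centerpoint and charges the resulting ``non-entrance'' pieces to reflex vertices, which your sketch omits even though the theorem covers holes.

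The lower-bound sketch also has a genuine flaw: $r$ mutually independent gadgets on $\Theta(n/r)$ points each force nothing, because balance is a \emph{global} condition --- if no clique (indeed no arc) spans two gadgets, the empty separator already leaves components of size $\Theta(n/r)\le\delta n$, so a balanced clique-based separator need not touch any gadget. The paper's construction (Fig.~\ref{fi:vis-lower-bound}(i)) instead interconnects two families of $r/2$ clusters of $\Theta(n/r)$ points so that every cluster $A_i$ completely sees every cluster $B_j$ within distance $1$, while clusters on the same side do not see each other; then any balanced separator of $\gvis(Q)$ must contain all of $\bigcup A_i$ or all of $\bigcup B_j$, and since no clique spans two same-side clusters this costs $\Omega(r)$ cliques and $\Omega(r\log(n/r))$ weight. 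Some such cross-interconnection is indispensable for the lower bound; adding the $\sqrt n\times\sqrt n$ grid for the $\Omega(\sqrt n)$ term is fine.
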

%--------------------------------------------------------------------------------------

%--------------------------------------------------------------------------------------
\mypara{The lower bound.}
%--------------------------------------------------------------------------------------
Recall that even for non-visibility restricted unit-disk graphs, $\Omega(\sqrt{n})$ is a lower
bound on the worst-case size of the separator. Hence, to prove the lower bound of 
Theorem~\ref{th:vis-separator} it suffices to give an example where the size and weight
are $\Omega(\min(n,r))$ and $\Omega(\min(n,r\log (n/r)))$, respectively. This example
is given in Fig.~\ref{fi:vis-lower-bound}(i). 
%--------------------------------------------------------------------------------------
\begin{figure}
\begin{center}

\includegraphics[clip,width=\columnwidth]{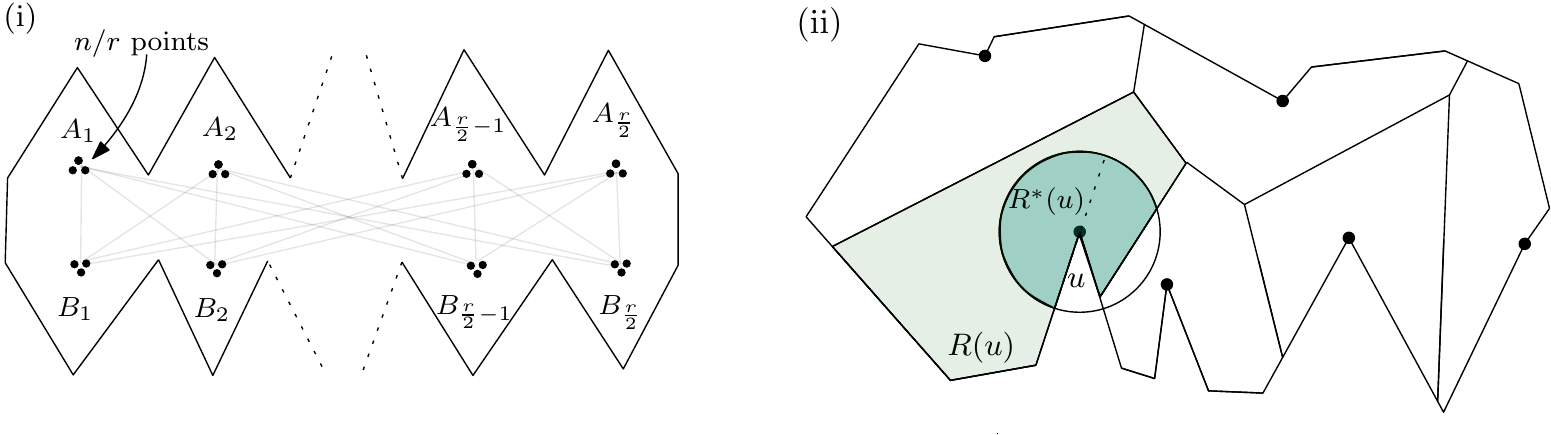}
\end{center}
\caption{(i) Each cluster $A_i$ sees any of the clusters $B_j$ completely and all distances
         are at most~1, so a separator
         that splits $\gvis\left(\bigcup A_i \cup \bigcup B_j\right)$ into two or more
         components must fully contain $\bigcup A_i$ or $\bigcup B_j$. Since the
         clusters~$A_i$ (and similarly $B_j$) do not see each other, such a separator has
         size at least $r/2$ and weight at least~$((r/2)\log(n/r))$.
         (ii) Splitting $R^*(u)$ into two convex parts.}
\label{fi:vis-lower-bound}
\end{figure}
%--------------------------------------------------------------------------------------

%--------------------------------------------------------------------------------------
\mypara{The upper bound.}
%--------------------------------------------------------------------------------------
We first describe our construction for polygons without holes.
Our separator construction has two steps. In the first step we put all points that can see
a reflex vertex within distance~$\sqrt{2}$ into the separator. We will argue that we
can do this in such a way that we put $O(1)$ cliques per reflex vertex into
the separator. In the second step we handle the remaining points. We take a centerpoint~$c$
inside~$P$ and then define $\sqrt{n}$ chords through~$c$. For each chord~$s_i$ we put the
points within distance~1/2 into the separator, suitably grouped into cliques. We will argue that
the total weight of the cliques, over all chords~$s_i$ is $O(n)$.
Hence, there is a chord whose cliques have total weight $O(\sqrt{n})$.
Adding the weight of the cliques we added in Step~1 then gives us the desired separator.

%--------------------------------------------------------------------------------------
\paragraph*{Step~1: Handling points that see a nearby reflex vertex} 
%--------------------------------------------------------------------------------------
Let $\Vr$ be the set of reflex vertices of $P$, and let $Q_1\subseteq Q$ be the set 
of points that can see a reflex vertex within distance~$\sqrt{2}$. 
Consider the geodesic Voronoi diagram of $\Vr$ within~$P$. Let $R(u)$ be the Voronoi region of 
vertex $u\in \Vr$ and define $R^*(u) := R(u) \cap D(u,\sqrt 2)$. 
% where $D(u,\sqrt2 )$ denotes the disk centered at $u$ with radius $\sqrt 2$.
Note that all points in $R(u)$ can see~$u$ and that all points in $Q_1$ are in~$R*(u)$ for
some vertex~$u\in \Vr$. The following lemma is illustrated in Fig.~\ref{fi:vis-lower-bound}(ii).
%--------------------------------------------------------------------------------------
\begin{lemma} \label{le:split-Rstar}
Let $e$ be an edge incident to $u$. If we extend $e$ until it hits $\bd R^*(u)$
then $R^*(u)$ is split into two convex parts.
\end{lemma}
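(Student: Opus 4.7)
The plan is to establish that $R^*(u)$ has a boundary which is locally convex from inside at every point except $u$ itself, and that the unique reflex angle, which sits at $u$, is split by extending $e$ into a straight angle and a convex wedge. Each of the two resulting pieces is then simply connected and has a locally convex boundary everywhere, hence is convex.

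First I would argue that $R(u)$ is Euclidean star-shaped from $u$. Every $p\in R(u)$ sees $u$, for otherwise $\pi(p,u)$ would bend at some reflex vertex $w\in\Vr$, giving $\|\pi(p,w)\|<\|\pi(p,u)\|$ and contradicting $p\in R(u)$. A short triangle-inequality argument for geodesic distances then shows that the whole segment $\overline{up}$ stays in $R(u)$, so $R^*(u)$ is star-shaped from $u$ and in particular simply connected. Because geodesic and Euclidean distances to $u$ coincide throughout $R(u)$, $R^*(u)$ also equals the intersection of $R(u)$ with the Euclidean disk of radius $\sqrt{2}$ around $u$.

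The crucial structural step is to show that every bisector arc on $\bd R^*(u)$ is a straight segment. Suppose $p$ lies on the bisector $B_{uv}$ between $R(u)$ and $R(v)$. If $p$ did not see $v$, the geodesic $\pi(p,v)$ would make its first bend at some reflex vertex $w$, yielding $\|\pi(p,v)\|=|pw|+\|\pi(w,v)\|>|pw|\geq\|\pi(p,u)\|$, which contradicts $\|\pi(p,u)\|=\|\pi(p,v)\|$. Hence $p$ sees both $u$ and $v$, and $B_{uv}$ coincides locally with the Euclidean perpendicular bisector of $uv$. With all bisectors straight, $\bd R^*(u)$ is built from circular arcs of $\bd D(u,\sqrt{2})$ (convex from inside, since they curve toward $u$), straight bisector segments, and straight pieces of $\bd P$; any polygon vertex on $\bd R^*(u)$ is convex, because any reflex vertex $v\neq u$ satisfies $\|\pi(v,v)\|=0<\|\pi(v,u)\|$ and therefore lies in the interior of $R(v)$.

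All corner angles on $\bd R^*(u)$ are thus at most $\pi$ except at $u$ itself, where the interior angle equals the reflex polygon angle $\theta\in(\pi,2\pi)$. Extending $e$ past $u$ produces a segment $s\subset R^*(u)$ that enters the interior of $P$ precisely because $u$ is reflex, and $s$ subdivides $\theta$ into an angle of $\pi$ between $e$ and $s$ and an angle of $\theta-\pi<\pi$ between $s$ and the other edge at $u$. Each of the two resulting pieces of $R^*(u)$ is therefore simply connected with a locally convex boundary everywhere, and hence convex. The main obstacle I foresee is the bisector-straightness claim; once that is in hand, convexity of the two pieces reduces to local angle bookkeeping along $\bd R^*(u)$ and the standard fact that a simply connected planar region with a locally convex boundary is convex.
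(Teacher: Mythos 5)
Your proposal is correct and follows essentially the same route as the paper's proof: both hinge on showing that every $p\in R(u)$ sees $u$ (so geodesic and Euclidean distance to $u$ agree on $R(u)$), that the bisector arcs of the geodesic Voronoi cell are therefore straight perpendicular-bisector segments, and hence that $u$ is the unique reflex vertex of $R^*(u)$, after which intersecting with the Euclidean disk $D(u,\sqrt 2)$ and splitting along the extension of $e$ yields two pieces with locally convex boundary. Your write-up is somewhat more careful than the paper's (you explicitly rule out other reflex vertices of $P$ lying on $\bd R^*(u)$ and spell out why the bisectors are straight), but these are elaborations of the same argument rather than a different approach.
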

%--------------------------------------------------------------------------------------
\begin{proof}
It suffices to argue that the angle of $R(u)$ corresponding to $u$ is the only reflex angle within $R(u)$. 
Then the result will follow from the fact that $D(u,\sqrt 2)$ is convex. Clearly the shortest path connecting 
any point $p \in R(u)$ to $u$ is the segment~$pu$. This implies that $\bd R(u)$ consists of 
straight line segments that are parts of perpendicular bisectors of the segments connecting 
$u$ to other reflex vertices of $P$. Any two such consecutive edges then cannot form a reflex angle within $R(u)$.
\end{proof}
%--------------------------------------------------------------------------------------
Since $R^*(u)$ has diameter~$O(1)$, Lemma~\ref{le:split-Rstar} implies that $Q_1\cap R^*(u)$ 
can be partitioned into $O(1)$ cliques. We collect all these cliques into a set~$\sep_1$.
Since there are~$r$ reflex vertices, $\sep_1$ consists of $O(r)$ cliques whose total
weight is~$O(r\log(n/r))$.

%--------------------------------------------------------------------------------------
\paragraph*{Step~2: Handling points that do not see a nearby reflex vertex}
%--------------------------------------------------------------------------------------
%Let $c$ be the centerpoint of $P$. The idea is to draw $\sqrt n$ lines through the centerpoint and show how we can split points close to these lines into cliques of total weight $O(n)$. Then there will exist a line whose cliques will have a total weight of $O(\sqrt n)$. Note that the set of cliques that corresponds to each line will be a separator for the set $H$. 
%
%Let $m=\floor{\sqrt n}$. We draw $m$ lines $l_1,l_2,...,l_m$ through $c$ such that $l_1$ forms an angle of $\frac{\pi}{m}$ with the horizontal line through $c$ and $l_i$ forms an angle of $\frac{\pi}{m}$ with $l_{i+1}, i=1,...,m-1$. We also draw a grid of unit squares , with dimensions $(2m+1) \times (2m+1)$ with its central unit square centered at $c$. (see Fig. ~\ref{fig:square-grid}).  We will denote with $C_l$ the set of cliques that the separator corresponding to $l$ will contain. Note that $C_l$ now only contains points from the set $H$. We will denote with $\weight(C_l)$ the total weight of the cliques in $C_l$, that is: $$\weight(C_l)=\sum_{C \in C_l} \log{\left( |C|+1\right)}$$. 
%
%\begin{figure}
%    \centering
%    \includegraphics[scale=0.5]{}
%    \caption{The square grid centered at the centerpoint $c$ and the lines passing through $c$ for $m=6$.}
%    \label{fig:square-grid}
%\end{figure}
Our separator $\sep$ consists of the cliques in $\sep_1$ plus a set $\sep_2$
of cliques that are found as follows. 
Let $Q_2 := Q\setminus Q_1$ be the set of points that do not see a reflex vertex within distance~$\sqrt{2}$. 
Let $c$ be a \emph{centerpoint for $Q_2$ inside $P$},
that is, a point such that any chord through~$c$ splits~$P$ into two half-polygons containing 
at most $2|Q_2|/3$ points from $Q$.
Such a point exists by Theorem~\ref{th:centerpoint}.
Let $G$ be a $\sqrt{n}\times\sqrt{n}$ grid centered at~$c$,
where we assume for simplicity that $\sqrt{n}$ is integer. 
For each of the $\sqrt{n}$ points~$g_i$ in the rightmost column
of the grid (even if $g_i\not\in P$), we define a maximal chord~$s_i$ by taking the line~$\ell_i$ through $c$ and $g_i$
and then taking the component of $\ell_i\cap P$ that contains~$c$; see Fig.~\ref{fi:vis-separator-app}(i).
We assume for simplicity that the chords~$s_i$ doe not pass through reflex vertices;
this can be ensured by slightly rotating the grid, if necessary.

For each chord~$s_i$, we define $Q_2(s_i)$ be the set of points $q\in Q_2$
such that there is a point $z\in s_i$ that sees $q$ with $|qz|\leq 1/2$.
%--------------------------------------------------------------------------------------
\begin{figure}
\begin{center}
\includegraphics{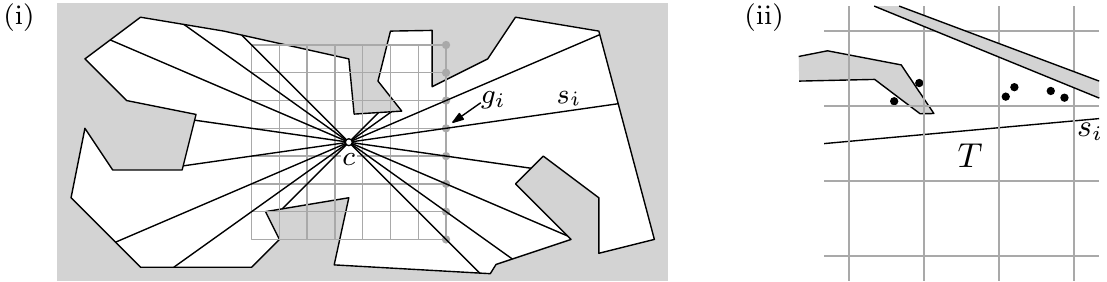}
\end{center}
\caption{(i) The grid $G$ defining the chords~$s_i$. 
          (ii) The points in the top-left cell do not see each other, but they see a reflex vertex
               and so they are not in $Q_2(s_i)$.
               The points in the top-right cell can be partitioned into $O(1)$ cliques.}
\label{fi:vis-separator-app}
\end{figure}
%--------------------------------------------------------------------------------------
Note that $\gvis(Q)$ cannot have an arc between a point $p\in Q_2 \setminus Q_2(s_i)$ above $s_i$
and a point $q\in Q_2 \setminus Q_2(s_i)$ below $s_i$; otherwise $p$ and/or $q$ see a point on
$z$ within distance~1/2, and so at least one of $p,q$ is in $Q_2(s_i)$. Since $s_i$ is a chord
through the centerpoint~$c$, this means that $s_i$ induces a balanced separator.

It remains to argue that at least one of the chords~$s_i$ induces a separator whose weight
is small enough. We will do this by creating a set~$\sep(s_i)$ of cliques for each chord~$s_i$,
and prove that the total weight of these cliques, over all chords $s_i$, is $O(n)$. Since there
are $\sqrt{n}$ chords, one of them has the desired weight. 

%--------------------------------------------------------------------------------------
\mypara{Step~2.1: Points outside the grid.}
%--------------------------------------------------------------------------------------
We simply  put all points from $Q_2(s_i)$ that lie outside the grid $G$---that is, the points
that do not lie inside a grid cell---into $\sep(s_i)$, as singletons.
%--------------------------------------------------------------------------------------
\begin{observation}\label{obs:outside-grid}
Let $q$ be a point that does not lie in any of the cells of the grid~$G$. Then $q$ lies within
distance~$1/2$ of at most two chords~$s_i$.
\end{observation}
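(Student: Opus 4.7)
The plan is to exploit the pencil structure of the chords: all $s_i$ pass through the centerpoint $c$, and their intersections with the vertical line $\ell$ supporting the rightmost column of $G$ are the $\sqrt{n}$ grid points $g_i$, which lie at unit spacing on $\ell$. Since being within distance $1/2$ of the chord $s_i$ implies being within distance $1/2$ of the line containing $s_i$, it suffices to bound the number of such supporting lines near $q$; and since each such line is symmetric under reflection through $c$, I may restrict attention to $q$ lying on the same side of $c$ as $\ell$.

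Placing $c$ at the origin and writing $x_0 := \sqrt{n}/2$, so that $g_i = (x_0, y_i)$ with unit-spaced $y_i$ satisfying $|y_i| \leq x_0$, I will use the following short identity from similar triangles: for $q = (q_x, q_y)$ with $q_x > 0$, the perpendicular distance from $q$ to the line through $c$ and $g_i$ equals $q_x\,|q'_y - y_i|/\sqrt{x_0^2 + y_i^2}$, where $q' := (x_0, q_y x_0/q_x)$ is the central projection of $q$ onto $\ell$. Combining with $\sqrt{x_0^2 + y_i^2} \leq \sqrt{2}\,x_0$, the condition ``distance at most $1/2$'' simplifies to $|q'_y - y_i| \leq \sqrt{2}\,x_0/(2 q_x)$.

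I will then split the hypothesis ``$q$ lies in no cell of $G$'' into two regions. In the first, $q$ is horizontally outside $G$, i.e., $q_x > x_0$: the right-hand side is strictly less than $\sqrt{2}/2 < 1$, so the admissible $y_i$ lie in an interval of length less than $\sqrt{2} < 2$ centered at $q'_y$, and unit-spacing allows at most two. In the second, $q$ is vertically but not horizontally outside $G$, i.e., $0 < q_x \leq x_0$ and $|q_y| > x_0$: then $|q'_y| \geq x_0 \geq \max_i |y_i|$, and by sign symmetry I may assume $q'_y > \max_i y_i$, so the previous bound combined with $y_i \leq \max_i y_i < q'_y$ confines the admissible $y_i$ to a one-sided interval of length less than $1$ terminating at $\max_i y_i$, leaving at most one. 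The remaining degenerate case $q_x = 0$ with $|q_y| > x_0$ is handled by an angular argument: since $cq$ is vertical while every chord makes angle at most $\pi/4$ with the horizontal, the perpendicular distance from $q$ to any chord is at least $|q_y|/\sqrt{2} > x_0/\sqrt{2} \geq 1/2$ (assuming $n \geq 2$), so no chord qualifies.

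The main obstacle will be the second region: because $q_x$ may be much smaller than $x_0$, the naive bound $\sqrt{2}\,x_0/(2 q_x)$ on $|q'_y - y_i|$ can exceed $1$, and a direct two-sided interval argument is too weak. The resolution is geometric: because the central projection lands strictly beyond the entire $y_i$-range, the bound is only needed on the side of $q'_y$ where the $y_i$ actually lie, and this one-sided constraint shrinks the relevant window to length less than $1$, as required.
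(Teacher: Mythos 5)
Your argument is correct, and in fact the paper never spells out a proof of Observation~\ref{obs:outside-grid} at all---it is stated and immediately used---so there is nothing to compare routes against; your coordinate computation serves as a valid, self-contained verification. The reduction to supporting lines is the right first step (distance to the chord $s_i$ dominates distance to the line $\ell_i$, and you only need a necessary condition), the central-projection identity $\dist(q,\ell_i)=q_x|q'_y-y_i|/\sqrt{x_0^2+y_i^2}$ is correct, and both regimes check out: for $q_x>x_0$ the admissible $y_i$ lie in an interval of length $\sqrt{2}\,x_0/q_x<2$ centered at $q'_y$, giving at most two unit-spaced values, and in the vertical regime the one-sided window $\bigl[q'_y-\sqrt{2}\,x_0/(2q_x),\,\max_i y_i\bigr]$ has length at most $x_0-\frac{x_0}{q_x}\bigl(x_0-\frac{\sqrt{2}}{2}\bigr)\leq\frac{\sqrt{2}}{2}<1$ using $q_y>x_0\geq q_x$, giving at most one. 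You assert rather than compute that last length bound, but it does follow in two lines as just indicated, so this is not a gap. Two idealizations are worth a remark, though both are harmless: the points $g_i$ sit at the rightmost column of a grid of unit cells, so their horizontal offset from $c$ may be $\sqrt{n}/2-\frac12$ rather than $\sqrt{n}/2$ (the same estimates go through with $x_0$ replaced by that offset), and the ``sign symmetry'' step in the second regime either needs $\{y_i\}$ to be symmetric about the horizontal line through $c$ or can simply be replaced by running the same one-sided argument against $\min_i y_i$. Your case analysis (after the central-reflection reduction to $q_x\geq 0$) covers all points outside the grid, so the conclusion ``at most two'' holds in every case.
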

%--------------------------------------------------------------------------------------
%
% OLD PROOF< WITH DIFFERENT DEF OF GRID
%
%Let $p$ be a point which lies at distance greater than $\sqrt n$ from $c$. Let $vdist(p,l)$ (from \emph{visibility distance}) denote the distance of $p$ from its closest visible point on $l$.
%
%If $p$ lies on one of our lines, then it will have a visibility distance which is greater than $1$ from other lines. Obviously it suffices to show this for one of the two neighboring lines. 
%For the following refer to Fig. ~\ref{fi:farpoints}.
%
%We have that: $$\sin{\phi}=\frac{d}{r} \Leftrightarrow d=r\sin{\phi} >\sqrt n \sin{\frac{\pi}{\sqrt n}} $$
%
%\begin{figure}
%    \centering
%   \includegraphics[scale=0.7]{}
%    \caption{The circle of radius $m=\floor{\sqrt n}$ and center $c$ is drawn. Point $p$ lies outside this circle.}
%    \label{fi:farpoints}
%\end{figure}
%
%It's trivial to see that $\sqrt n  \sin{\frac{\pi}{\sqrt n}} >1 $ when $\sqrt n >=2$. Because $vdist(p,l)\geq d$,  we have that $vdist(p,l)> 1$. Therefore we can add such points to the separator of the line they lie on,  as singletons. 
%
%We similarly show that when $p$ lies between two lines then its distance from the next two closest lines is greater than $1$ and therefore it suffices to add it as a singleton to the separator of at most two lines. This shows that points that have a distance greater than $\sqrt n$ from $c$ add a total weight which is at most $2n$ to our separators. 
%
%--------------------------------------------------------------------------------------
Observation~\ref{obs:outside-grid} implies that the total number of singleton cliques
of points outside grid cells, summed over all chords~$\sep(s_i)$, is~$O(n)$.

%--------------------------------------------------------------------------------------
\mypara{Step ~2.2: Points inside the grid.}
%--------------------------------------------------------------------------------------
Next, consider a cell~$T$ of the grid $G$. Suppose a point $q\in Q_2(s_i)$ sees a point $z\in T\cap s_i$ 
such that $|qz|\leq 1/2$. Then $q$ must lie inside one of the at most nine grid cells surrounding and including~$T$.
Consider such a cell $T'$. 
%--------------------------------------------------------------------------------------
\begin{lemma}\label{le:inside-grid}
The points from  $Q_2(s_i)\cap T'$ that can see a point on $s_i\cap T$ within distance $1/2$ can be partitioned
into $O(1)$ cliques.
\end{lemma}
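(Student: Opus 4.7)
The plan is to exploit the $Q_2$ assumption to obtain strong local convexity of the visibility region, and then use a common reference point on $s_i\cap T$ to bundle the points of $X$ (the set named in the lemma) into a single clique, or a constant number of cliques after a trivial subdivision of $T'$.

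First I will establish the following key geometric fact: for every $q\in Q_2$, no reflex vertex of $P$---visible from $q$ or not---lies at distance at most $\sqrt{2}$ from $q$. If $v$ were an invisible reflex vertex with $|qv|\le \sqrt{2}$, then $v$ would lie in some connected component (``pocket'') $S$ of $P\setminus\mathrm{Vis}(q)$, whose boundary toward $\mathrm{Vis}(q)$ is a window emanating from a visible reflex vertex $w$. The window is collinear with $q$ and $w$ and extends from $w$ away from $q$, so $w$ is the closest point of $\overline{S}$ to $q$, giving $|qw|\le |qv|\le \sqrt{2}$ and contradicting $q\in Q_2$. Consequently, the boundary of $P$ inside $B(q,\sqrt{2})$ has no reflex corners, so every connected component of $P\cap B(q,\sqrt{2})$ is convex; in particular the component $C_q$ containing $q$ is convex, and $\mathrm{Vis}(q)\cap B(q,\sqrt{2})=C_q$.

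Next, I fix a reference point $z^*$ on $s_i\cap T$, say its midpoint. For each $q\in X$ with witness $z_q\in s_i\cap T$, the sub-segment of $s_i$ from $z_q$ to $z^*$ lies in $P$ and, by the bounded diameter of $T\cup T'$, also inside $B(q,\sqrt{2})$; this segment connects $z_q\in C_q$ to $z^*$, so $z^*\in C_q$ and by convexity $q$ sees $z^*$. For any two $q_1,q_2\in X$, the segment $q_2z^*$ similarly lies in $B(q_1,\sqrt{2})\cap P$, placing $q_2$ in $C_{q_1}$; convexity of $C_{q_1}$ then gives $q_1q_2\subset C_{q_1}\subset P$, so $q_1$ sees $q_2$. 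Together with $|q_1q_2|\le 1$ (from the diameter of $T'$), this makes $q_1$ and $q_2$ adjacent in $\gvis(Q)$, so $X$ is a single clique; if the grid cells are too large for all of the above distances to fit comfortably inside the ball of radius $\sqrt{2}$, a constant-size subdivision of $T'$ into pieces of sufficiently small diameter yields the same conclusion with $O(1)$ cliques total.

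The main obstacle is the key geometric fact in the first step: proving that every point of a pocket of $P\setminus\mathrm{Vis}(q)$ is at least as far from $q$ as the visible reflex vertex starting its bounding window. This requires a careful description of the combinatorial structure of $\mathrm{Vis}(q)$---specifically that a pocket's window lies on the ray from $q$ through the associated visible reflex vertex and extends away from $q$---after which the rest of the argument is routine distance bookkeeping.
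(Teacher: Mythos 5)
Your central ``key geometric fact'' --- that for $q\in Q_2$ \emph{no} reflex vertex of $P$, visible or not, lies within Euclidean distance $\sqrt{2}$ of $q$ --- is false, and the argument you give for it is where the proof breaks. The step ``the window is collinear with $q$ and $w$ and extends away from $q$, so $w$ is the closest point of $\overline{S}$ to $q$'' conflates Euclidean proximity with connectivity inside $P$: a pocket of $P\setminus \mathrm{Vis}(q)$ can re-approach $q$ through the exterior of $P$ (or around an obstacle), so it may contain points much closer to $q$ than its window vertex. Concretely, take a simple U-shaped polygon with two long thin arms joined at the bottom, place $q$ near the tip of one arm, and cut a small notch near the tip of the other arm. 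The notch creates reflex vertices at distance roughly the gap width from $q$ (well below $\sqrt{2}$), yet none of them is visible from $q$ (every segment from $q$ to the other arm leaves $P$), and the only reflex vertices $q$ sees are at the bottom of the U, arbitrarily far away. So $q\in Q_2$ while a reflex vertex lies within $\sqrt{2}$, contradicting your fact; consequently ``$\bd P$ has no reflex corners inside $B(q,\sqrt{2})$'' and ``every component of $P\cap B(q,\sqrt{2})$ is convex'' do not follow. What you actually need is only the weaker statement that the component $C_q$ of $P\cap B(q,\sqrt{2})$ containing $q$ is convex and equals $\mathrm{Vis}(q)\cap B(q,\sqrt{2})$; this is true for $q\in Q_2$, but it requires a different argument (e.g., a shortest path inside $P\cap B(q,\sqrt{2})$ can only turn at reflex vertices of $P$ inside the disk, and its first turning point would be a reflex vertex visible from $q$ within distance $\sqrt{2}$), not the global ``no nearby reflex vertex'' claim.

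There is also a secondary, repairable issue in the bookkeeping: with the grid cells used here (so that quarter-cells have diameter $\tfrac{1}{2}\sqrt{2}$), a single reference point $z^*\in s_i\cap T$ can be at distance up to about $\tfrac{1}{2}+\sqrt{2}>\sqrt{2}$ from $q$, so the segments $z_qz^*$ and $q_2z^*$ need not lie in $B(q,\sqrt{2})$; subdividing $T'$ alone does not fix this --- you must also split $s_i\cap T$ into $O(1)$ pieces and form one clique per pair of pieces. For comparison, the paper avoids visibility-region structure altogether: it splits $T'$ into four subcells of diameter $\tfrac{1}{2}\sqrt{2}$ and argues that if two points $p,q$ in one subcell with witnesses $z_1,z_2$ on $s_i\cap T$ did not see each other, then $\bd P$ would have to cross $pq$ but cannot cross $pz_1$, $qz_2$, or $s_i$, forcing a reflex vertex inside the quadrilateral $pz_1z_2q$ (which fits in a unit square) that both points can see within distance $\sqrt{2}$ --- contradicting $p,q\in Q_2$. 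That crossing argument sidesteps exactly the false global claim your proof relies on.
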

%--------------------------------------------------------------------------------------
\begin{proof}
We will show that we can split these points into at most four cliques. 
To this end we split $T'$ in four equal-sized subcells; see Fig.~\ref{fi:inside-grid}. 
Clearly each subcell has a diameter~$\frac{1}{2}\sqrt{2}<1$. It suffices now to show that if two points 
$p,q$ in the same subcell do not see each other, then they see a reflex vertex within distance $\sqrt 2$; 
hence they are in $Q_1$ and not in $Q_2(s_i)\cap T'$.
%--------------------------------------------------------------------------------------
\begin{figure}
\begin{center}
\includegraphics{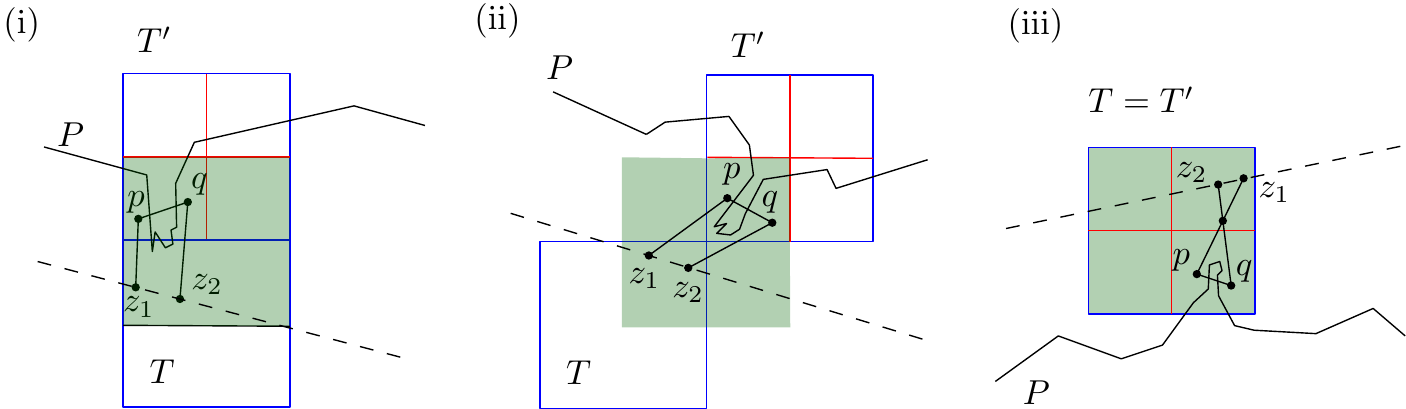}
\end{center}
\caption{Illustration for the proof of Lemma~\ref{le:inside-grid}.}
\label{fi:inside-grid}
\end{figure}
%--------------------------------------------------------------------------------------

Let $z_1$ and $z_2$ be points on $s_i \cap T$ that see $p,q$, respectively, with
$|pz_1|\leq 1/2$ and $|qz_2| \leq 1/2$. The quadrilateral formed by the points $p,q,z_1,z_2$ must always lie 
within a square of side length~1. This square is shown in green in Fig.~\ref{fi:inside-grid}, for three different  
cases depending on the relative position of~$T'$ and~$T$. In this quadrilateral, the side~$pq$ has to be crossed 
by the $\bd P$, since we assumed $p$ and $q$ do not see each other. Moreover,
$\bd P$ cannot cross $s_i$ or the segments $pz_1,qz_2$. Thus at least one reflex vertex of $\bd P$
lies within the quadrilateral. It's then clear that both $p,q$ will be able to see a reflex vertex 
within the quadrilateral (possibly not the same); since all points lie within a square of side length~1, 
the result follows. Note that in Fig.~\ref{fi:inside-grid} (iii) we have made the edges $pz_1,qz_2$ 
cross each other to showcase that the proof still goes through in this case.
\end{proof}
%--------------------------------------------------------------------------------------
We thus create $O(1)$ cliques for each cell~$T'$ that is one of the at most nine cells
surrounding a cell $T$ crossed by~$s_i$, and put them into~$\sep(s_i)$.
This adds at most $O(\log (n_{T'}+1))$ weight to $\sep(s_i)$, where $n_{T'} := |Q_2\cap T'|$.
Since there are $n$ cells in total, this immediately gives a total weight of
$O(n\log n)$ over all sets $\sep(s_i)$. Next we show that the the total weight of the cliques is actually $O(n)$. 
In what follows we denote by $\weight(\sep(s_i))$ the total weight of the cliques corresponding to chord $s_i$.
%--------------------------------------------------------------------------------------
\begin{lemma}
The total weight of the cliques corresponding to the chords $s_i$ is $O(n)$, that is:
\[
\sum_{i=1}^{\sqrt{n}}\weight(\sep(s_i))=O(n).
\]
\end{lemma}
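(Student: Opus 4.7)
The plan is to decompose $\weight(\sep(s_i))$ into contributions from the singleton cliques added in Step~2.1 (for points of $Q_2(s_i)$ that lie outside all grid cells) and contributions from the $O(1)$ cliques added in Step~2.2 for each relevant cell $T'$, and to bound each part by $O(n)$ when summed over the $\sqrt{n}$ chords.

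For the singletons the bound is immediate: by Observation~\ref{obs:outside-grid}, any point of $Q_2$ outside the grid belongs to $\sep(s_i)$ for at most two chords, contributing weight $\log 2$ each time, so the total singleton contribution across all chords is $O(n)$.

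For the cells, the key geometric estimate I need is that each grid cell $T'$ at grid-distance $d(T')$ from the centerpoint $c$ is \emph{relevant} (i.e.\ lies in the $3\times 3$ neighborhood of some cell crossed by $s_i$) for at most $O(\sqrt{n}/\max(d(T'),1))$ chords. This holds because $T'$ is relevant to $s_i$ only when the supporting line $\ell_i$ passes within $O(1)$ grid units of the centre of $T'$, which is an angular constraint of width $O(1/d(T'))$ at $c$, and because the $\sqrt{n}$ chords are determined by $\sqrt{n}$ collinear points on the rightmost column of $G$ and so have pairwise angular separations of $\Theta(1/\sqrt{n})$ at $c$. Setting $n_{T'}:=|Q_2\cap T'|$, the cell contribution across all chords is thus at most
\[
\sum_{T'} O\!\left(\frac{\sqrt{n}}{\max(d(T'),1)}\right)\cdot \log(n_{T'}+1).
\]

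To bound this sum by $O(n)$, I would group the cells into ``shells'' $S_d$ of cells at grid-distance in $[d,d+1)$ from $c$, each of size $O(d)$. Writing $m_d:=\sum_{T'\in S_d}n_{T'}$, concavity of $\log$ yields $\sum_{T'\in S_d}\log(n_{T'}+1)\le O(d)\log(1+m_d/d)$, which reduces the sum above to $O(\sqrt{n})\cdot\sum_d \log(1+m_d/d)$. Since $\sum_{d=1}^{\sqrt{n}}(d+m_d)=O(n)$, applying Jensen's inequality to $\sum_d\log(d+m_d)$ and Stirling's formula to $\sum_{d=1}^{\sqrt{n}}\log d$ and taking the difference gives $\sum_d \log(1+m_d/d)=O(\sqrt{n})$; multiplying by the outer $O(\sqrt{n})$ yields the desired $O(n)$ bound. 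The main obstacle is this final calculation: a direct application of Jensen alone gives only $O(\sqrt{n}\log n)$, and the cancellation against Stirling's formula is what delivers the sharp $O(\sqrt{n})$ estimate.
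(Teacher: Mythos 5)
Your proposal is correct and takes essentially the same route as the paper's proof: you group cells into radial shells around the centerpoint and justify the $O(\sqrt{n}/d)$ chord-count per cell via angular spacing, whereas the paper groups cells into column pairs $K_j$ with $|K_j^*|=O(j)$ relevant cells each crossed by $O(\sqrt{n}/j)$ chords, but this is only a cosmetic difference. The core steps — $O(\sqrt{n}/d)$ chords per cell at distance $d$, $O(d)$ cells per distance class, concavity of $\log$ (AM--GM/Jensen) within each class, and the final Jensen-plus-Stirling cancellation showing $\sum_d \log(1+m_d/d)=O(\sqrt{n})$ — coincide exactly with the paper's Equations~(\ref{eq:sum1})--(\ref{eq:sum4}).
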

%--------------------------------------------------------------------------------------
\begin{proof}
Let $\T(s_i)$ denote the cells crossed by a chord~$s_i$, and let $N[T]$ denote the at most
nine cells surrounding a cell~$T$, including $T$ itself.
By the arguments above, the total weight of all separators is.
\begin{equation}\label{eq:sum1}
\sum_{i=1}^{\sqrt{n}} \sum_{T\in\T(s_i)} \sum_{T'\in N[T]} O(\log (n_{T'}+1))
\end{equation}
Let $K_1$ be the set of cells in the central column of the grid, that is, the
column containing the centerpoint~$c$. For $2\leq j\leq m$, where $m:=\lceil \sqrt{n}/2\rceil,$
let $K_j$ be the cells in the pair of columns at distance~$j-1$ from~$K_1$.
For example, $K_2$ are the cells in the columns immediately to the left and right of $K_1$.
Then we can make the following observations.
\begin{itemize}
\item The set~$K_j$ contains $4j-2=O(j)$ cells that are crossed by at least one chord~$s_j$.
\item Cells in $K_j$ are crossed by $O(\sqrt{n}/j)$ chords.
\end{itemize}
Now define $K^*_j$ to be the set of cells $T'$ in $K_j$ such that one of the nine cells
surrounding $T'$ is crossed by at least one chord. Note that each $T'\in K^*_j$
is in $N[T]$ for at most nine cells~$T$, which are all in $K_{j-1}\cup K_j\cup K_{j+1}$.
(Here we define $K_{-1}=K_{m+1}=\emptyset$.)
It then follows from the above observations that $|K_j^*|=O(j)$. Moreover, if
$T'\in K^*_j$ and $T'\in N[T]$, then $T$ is crossed by~$O(\sqrt{n}/j)$ chords.
Hence, we can rewrite~(\ref{eq:sum1}) as
\begin{equation}\label{eq:sum2}
\sum_{j=1}^{m} \left( O\left(\frac{\sqrt{n}}{j}\right) \cdot \sum_{T'\in K^*_j} O(\log (n_{T'}+1)) \right)
=
O\left( \sum_{j=1}^{m}  \frac{\sqrt{n}}{j} \cdot \sum_{T'\in K^*_j} \log (n_{T'}+1) \right)
\end{equation}
Now we are ready to compute the total weight of the separator. Note that
\[
\sum_{T'\in K^*_j} \log (n_{T'}+1) =   \log \left( \prod_{T'\in K^*_j} (n_{T'}+1)) \right)
\]
Now define $n_j$ to be the total number of points in the cells in $K^*_j$.
The AM-GM inequality gives us:
\[
\log \left( {\prod_{T' \in K^*_j }(n_{T'}+1)} \right)
\leq \log \left( \left(\frac{\sum_{T'\in K^*_j} (n_{T'}+1)}{|K_j^*|}\right)^{|K_j^*|} \right)
= |K_j^*| \log \left( \frac{n_j}{|K^*_j|} + 1 \right),
\]
Since $|K^*_j| = O(j)$ we have 
\[
|K_j^*| \log \left( \frac{n_j}{|K^*_j|} + 1 \right) =O\left( j\cdot \log \left(\frac{n_j}{j}+1\right)\right)
\]
Combining this with the previous we conclude that the total weight
over all separators is bounded by
\begin{equation} \label{eq:sum3}
O\left( \sum_{j=1}^{m}  \frac{\sqrt{n}}{j} \cdot j \log \left( \frac{n_j}{j} + 1 \right) \right)
=
O\left( \sum_{j=1}^{m}  \sqrt{n} \cdot \log \left( \frac{n_j}{j} + 1 \right) \right)
\end{equation}
We have that: 
\begin{equation}\label{eq:sum4}
\sum_{j=1}^{m} \log \left( \frac{n_j}{j} + 1 \right) 
=
O\left(\log \left( \frac{\prod_{j=1}^{m} n_{j}}{ m !}\right)\right)
\end{equation}
Recall that $m=\lceil \sqrt{n}/2 \rceil$, which implies that $n/m \leq 2m$.
Furthermore, Stirling's approximation tells us that $\log (m!) = m\log m - m + O(\log m)$.
Since $\sum_{j=1}^m n_j \leq n$ and using the AM-GM inequality we derive
\[
\begin{array}{lll}
\log \left( \frac{\prod_{j=1}^{m} n_{j}}{ m !} \right) &
\leq &
\log \left( (n/m)^m \right) - \log (m!) \\
& = & m \log (n/m) -  \left( m\log m - m + O(\log m) \right) \\
& \leq & m \log (2m) - \left( m\log m - m + O(\log m) \right) \\
& = & 2m + O(\log m) \\
& = & O(\sqrt{n})
\end{array}
\]
Combing this with Equations~(\ref{eq:sum3}) and~(\ref{eq:sum4})
we conclude that the total weight of the sets $\sep(s)$ is $O(n)$, as claimed.
\end{proof}
%--------------------------------------------------------------------------------------
We conclude that there will be a chord $s_i$ whose cliques have total weight $O(\sqrt n)$. Then our desired separator is the set $\sep=\sep_1 \cup \sep(s_i)$.

%--------------------------------------------------------------------------------------
\subsection{Extension to polygons with holes}
%--------------------------------------------------------------------------------------
We now show how to extend the previous approach so that it also works for a polygon~$P$ 
with holes. The key ideas remain the same. 
\medskip

As previously, the first step is to put the points that see a reflex vertex within distance $\sqrt 2$
into the separator. This can again be done using the geodesic Voronoi diagram of the reflex vertices. 
Recall that a vertex of a hole is defined to be reflex if its angle within $P$ (not: within the hole) 
is more than 180~degrees.
\medskip

In the second step we again construct $\sqrt{n}$ potential separators. When $P$ has holes, a chord
may not split $P$ into two half-polygons. Hence, instead of taking chords though a centerpoint 
for~$Q_2$ in~$P$, we proceed slightly differently. We take
a regular centerpoint for~$Q_2$, and for each point~$g_i$ in the rightmost column of
the grid~$G$ we take the full line~$\ell_i$ through $c$ and~$p_i$. Note that $\ell_i\cap P$ 
can consist of many chords. Together, these chords split $P$ into several parts; the parts 
above~$\ell_i$ contain at most $2|Q_2|/3$ points, and the parts below $\ell_i$ contain at most $2|Q_2|/3$ points as well.
Hence, the resulting separators will be balanced.

Points in $Q_2$ that lie outside the grid~$G$ can be handled as before. They are added as 
a singleton to $O(1)$ lines~$\ell_i$, so they contribute weight $O(n)$ in total.

%--------------------------------------------------------------------------------------
\begin{figure}
\begin{center}
\includegraphics{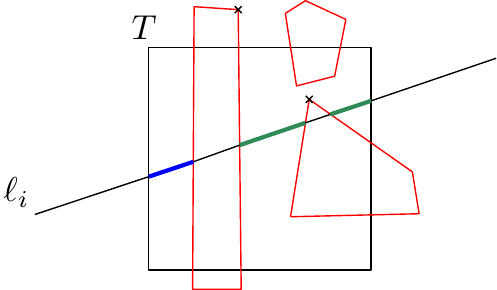}
\end{center}
\caption{A line $\ell_i$ which crosses a cell $T$ that has three holes, colored in red. The blue piece of $\ell_i$ is the entrance piece, while the two green pieces are the non-entrance pieces. Note that each green piece can be uniquely associated to the pair of entrance-exit points of the hole preceding it. Marked with a cross are two possible choices of reflex vertices that are then associated with the green pieces.}
\label{fi:holes}
\end{figure}
%--------------------------------------------------------------------------------------
We now proceed to describe how to split points in $Q_2$ within the grid into cliques.  
For each line $\ell_i$ and each grid cell $T$ we define an \emph{entrance piece} to be 
the first piece of $\ell_i \cap T \cap P$. The remaining pieces will be referred to as 
\emph{non-entrance pieces} (see Fig.~\ref{fi:holes}). Entrance pieces can be handled 
exactly as in the case without holes: we look at all the points in the nine cells 
surrounding and including $T$ which see this piece within a distance $1/2$ and split 
them in a constant number of cliques. Then the total sum of the weights of these 
cliques for all the lines will be $O(n)$ as before. Thus we can select a line~$\ell_i$ 
such that the weight of the cliques we add for their
for the entrance pieces is $O(\sqrt{n})$. 

It remains now to handle the non-entrance pieces. Let~$\ell_i$ be the line we
selected. We will show that $\ell_i$ 
has $O(r)$ non-entrance pieces and that for each of them we only need to add $O(1)$ cliques. 
For the first part, observe that if a line $\ell_i$ intersects a hole $2k$ times then 
the hole must have at least $k$ reflex vertices of $P$. These $2k$ points of intersection 
can be split into $k$ pairs of entrance-exit points. Now note that a non-entrance piece can be 
uniquely associated with a unique pair of entrance-exit points. Therefore we have associated 
each non-entrance piece with a unique reflex vertex of $P$ and as a result the number of these pieces is indeed $O(r)$.
The second part is handled as before by adding a constant number of cliques from the nine cells that surround the cell that contains each non-entrance piece.
Therefore the cliques that correspond to these pieces add a total weight of $O(r \log \frac{n}{r})$. 

To summarize, after having found a separator such that the total weight of the entrance pieces is $O(\sqrt{n})$
we can add the cliques for the  non-entrance pieces to obtain a separator of weight $O(r \log \frac{n}{r}+\sqrt n)$.
Together with the cliques from $\sep_1$ this gives the final separator.
This finishes the proof of Theorem~\ref{th:vis-separator}.

%--------------------------------------------------------------------------------------
\section{Applications}\label{sec:applications}
In this section, we show how one can use clique-based separators to get subexponential
algorithms for several classic graph problems, namely \mis, \fvs, and $q$-\col for
constant $q$. The algorithms are very generic: they only use the fact that the
clique-based separators exist and can be computed in polynomial time, so they can 
immediately be combined with any of our separator theorems. 

% The state of the art 
% and the obtained algorithms are summarized in Table~\ref{ta:algorithms}. 
% In addition to the algorithms in the table, 
There are subexponential algorithms for \mis, \textsc{Feedback Vertex Set}, 
and $3$-\col in the class of string graphs. In particular, Bonnet and Rzazewski~\cite{BonnetR19} obtain a running 
time of $2^{O(n^{2/3}\log n)}$ for \mis and $3$-\col, and $2^{O(n^{2/3}\log^{O(1)} n)}$ for \fvs. 
They also show that there is no subexponential algorithm for $q$-\col in string graphs in case of $q\geq 4$. 
All graph classes studied in this paper are a subclass of string graphs, therefore the algorithmic 
results of Bonnet and Rzazewski  carry  over.
In the case of map graphs, one can derive $2^{O(\sqrt{n}\log n)}$ algorithms for \mis and \fvs 
by using the trivial bound $k\leq n$ in the subexponential parameterized algorithms of
Fomin~\etal~\cite{FominLP0Z19}.
Moreover, the separator theorem of Matou\v{s}ek~\cite{C-map-graph-mis} can be used to directly get 
a $2^{O(\sqrt{n})}$ algorithm for $q$-\col for fixed constant $q$.
Finally, for polygons of total complexity $O(n)$, one can use the parameterized algorithm of
Marx and Philipczuk~\cite{MarxP15} to derive a $2^{O(\sqrt{n}\log n)}$ algorithm for \mis.

Using our separator theorems, we get subexponential algorithms for the three problems mentioned above in 
the various graph classes under consideration. The running times we obtain match or slightly
improve the results that can be obtained by applying the existing results mentioned above.
It should be kept in mind, of course, that the existing results are for more general graph classes.
An exception are our results on map graphs, which were explicitly studied before and where we
improve the running time for \mis and \fvs from $2^{O(\sqrt{n}\log n)}$ to $2^{O(\sqrt{n})}$.
(Admittedly, the existing results apply in the parameterized setting while ours don't.)
In any case, the main advantage of our approach is that it allows us to solve \mis, \fvs and
$q$-\col on each of the mentioned graph classes in a uniform manner. Specifically, our result are as follows.

In map graphs that are given by a linear-sized representation we obtain $2^{O(\sqrt{n})}$ algorithms 
for all three problems. In pseudodisk graphs and in intersection graphs of objects with linear union complexity, we get $2^{O(n^{2/3}\log n)}$ algorithms for \mis and \fvs, and a $2^{O(n^{2/3})}$ algorithm for $q$-\col with constant $q$. In intersection graphs of polygonal pseudodisks of total complexity $O(n)$, the obtained running times are $2^{O(\sqrt{n}\log n)}$ for \mis and \fvs, and $2^{O(\sqrt{n})}$ for $q$-\col with constant $q$. For intersection graphs of geodesic disks in a simple polygon, our separators yield the same running times as for pseudodisks: $2^{O(n^{2/3}\log n)}$ algorithms for \mis and \fvs, and a $2^{O(n^{2/3})}$ algorithm for $q$-\col with constant $q$.

We also get subexponential algorithms for
each of our problems in visibility-restricted unit-disk graphs in polygons 
with $r$~reflex vertices (assuming the total number of vertices is polynomial in~$n$). 
The running time is $2^{f(n,r)}$, where $f(n,r)=O(\min(n,r \log(n/r)) +\sqrt{n})$.

\mypara{The algorithms.}
Recall that the \mis problem is to find a maximum-size subset of pairwise non-adjacent nodes in a given graph~$G$. 
The \fvs problem asks 
to find a minimum-size subset of nodes whose deletion makes $G$ cycle-free. 
Here it will be more convenient to work with the complement of the solution, which corresponds to the \textsc{Maximum Induced Forest} problem: find a maximum-size
subset of nodes that induce a forest in $G$. Finally, in the $q$-\col problem,
the goal is to decide if $G$ has a proper $q$-coloring (that is, a coloring of the nodes
using $q$ colors such that adjacent nodes have different colors).
For $q$-\col, we can immediately reject any instance that contains a clique of size at
least $q+1$, as such a clique has no proper $q$-coloring. Therefore the weight of each
clique is at most $\log(q+1)=O(1)$, and we only need to deal with instances of maximum
clique size $q$. Consequently, if we have a clique-based separator $\myS$ with $s$ 
cliques in the graph, then the weight of the separator is at most $s\cdot \log(q+1)=O(s)$.

We can simplify the presentation by abstracting away the geometry. In what follows, we fix a graph class $\G$, and we assume that for any $n$-node graph $G\in \G$ we can compute a balanced clique-based separator $\myS$ of $G$ in polynomial time of size $s_\G(n)$ and weight $w_\G(n)$. Note that our separator computations rely on having a geometric representation of the input graph~$G$.
We will prove the following theorem.

%--------------------------------------------------------------------------------------

\begin{theorem}
Let $\G$ be a class of geometric intersection graphs, and suppose that $\G$ has a clique-based separator 
theorem of weight $w_\G(n)$ and size $s_\G(n)=n^{c}$ for some constant $0<c<1$. Suppose moreover that these separators can be computed 
in polynomial time.  Then for an $n$-node graph $G\in \G$ given by its geometric representation, one can solve:
\begin{enumerate}[(1)]
    \item \mis in $2^{O(w_\G(n))}$ time,
    \item \fvs in $2^{O(w_\G(n))}$ time,
    \item and $q$-\col for fixed constant $q$ in $2^{O(s_\G(n))}$ time.
\end{enumerate}
\end{theorem}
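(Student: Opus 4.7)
The overall plan is to give a single recursive divide-and-conquer meta-algorithm that, at each call on an $m$-node subgraph~$H$, computes a balanced clique-based separator~$\mathcal{S}$ for~$H$ in polynomial time, exhaustively enumerates every ``boundary pattern'' that an optimal solution can exhibit on~$V(\mathcal{S})$, and for each pattern recurses on the two sides $A$ and $B$ (pruned or annotated according to the pattern). The final answer is obtained by combining the best recursive solutions. The three problems differ only in what constitutes a pattern and in how the recursive calls are combined; the separator-based recursion itself is uniform.

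For \mis, a pattern is an independent set $I \subseteq V(\mathcal{S})$; since $I$ meets each clique in at most one node, the number of patterns is $\prod_{C\in\mathcal{S}}(|C|{+}1) = 2^{w_\G(m)}$. For each~$I$, I delete $I\cup N(I)$ and recurse on the (now disconnected) subgraphs induced on $A\setminus N(I)$ and $B\setminus N(I)$, returning $|I|$ plus the two recursive optima. For $q$-\col I reject immediately if some clique in $\mathcal{S}$ has more than $q$ nodes; otherwise each clique admits at most $q!$ proper colorings, giving $(q!)^{s_\G(m)} = 2^{O(s_\G(m))}$ patterns on~$V(\mathcal{S})$. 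For each pattern I propagate the chosen colors as list constraints on the neighbors in $A$ and $B$ and recurse on the two list-coloring instances; strengthening the statement from $q$-\col to list-$q$-\col closes the recursion cleanly.

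The main obstacle is \fvs, which I will attack via the equivalent \textsc{Maximum Induced Forest} problem. Since three pairwise-adjacent vertices form a triangle, any induced forest contains at most two vertices from each clique~$C$, contributing $O(|C|^2)$ simple choices and only $O(\log(|C|{+}1))$ weight. The subtlety is that acyclicity on each of $G[F\cap(A\cup V(\mathcal{S}))]$ and $G[F\cap(B\cup V(\mathcal{S}))]$ is not enough: cycles may alternate between $A$-paths and $B$-paths that share endpoints in $V(\mathcal{S})$. To control this I would augment the pattern with a \emph{connectivity signature} recording how the chosen separator vertices are to be linked by tree-paths on each side, and apply the rank-based / representative-set framework of Bodlaender, Cygan, Kratsch, and Nederlof so that only $2^{O(w_\G(m))}$ signatures (rather than the naive $2^{\Theta(w_\G(m)\log m)}$ set partitions) need to be stored per separator.

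In all three cases the work per recursive call satisfies $T(m) \leq 2^{O(w_\G(m))}(T(|A|)+T(|B|)) + m^{O(1)}$ (respectively $2^{O(s_\G(m))}(T(|A|)+T(|B|)) + m^{O(1)}$ for $q$-\col), with $|A|,|B|\leq \delta m$ for a fixed $\delta<1$. Along any root-to-leaf branch of the recursion tree the exponents $w_\G(\delta^k m)$ are dominated by a geometric series in $m^{c}$ and therefore sum to $O(w_\G(m))$; since the recursion has polynomial width and $O(\log n)$ depth, the product telescopes to yield the claimed $T(n) = 2^{O(w_\G(n))}$ and $2^{O(s_\G(n))}$ bounds.
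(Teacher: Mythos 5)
Your overall strategy matches the paper's proof closely: compute a balanced clique-based separator, enumerate the ways a solution can meet the separator cliques, recurse on the two sides, and combine. For \mis your argument is essentially identical to the paper's (at most one vertex per clique, $2^{\weight(\myS)}$ patterns, delete $I\cup N(I)$, recurse). For $q$-\col you also take the same route as the paper: reject if some clique exceeds $q$ nodes, enumerate the $2^{O(s_\G(n))}$ colorings of the $O(s_\G(n))$ separator nodes, pass color constraints into the recursion. You phrase the strengthened recursive problem as list-$q$-coloring; the paper phrases it as extending a partial proper coloring, which is the same device.

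For \fvs you correctly identify the central difficulty (cycles alternating between the two sides via shared separator vertices) and the right tool (the rank-based representative-set framework of Bodlaender~\etal), just as the paper does. However, your sketch has a gap: the connectivity signature is a partition of the \emph{entire boundary set} $B$, not merely of the current separator's chosen vertices, and $B$ accumulates as the recursion descends. Without a further argument, $|B|$ at an instance of size $m$ could be much larger than $s_\G(m)$, and the representative family of weighted partitions has size $2^{\Theta(|B|)}$, which would exceed your claimed per-call budget of $2^{O(w_\G(m))}$. The paper explicitly handles this by observing that the separators can be made balanced with respect to any node weighting, and alternating between separators balanced with respect to all nodes and separators balanced with respect to the current boundary set $B$; this keeps $|B|=O(s_\G(m))$ at every node of the recursion tree, after which $s_\G(m)\leq w_\G(m)$ gives the claimed bound. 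You should supply this (or an equivalent) control of the boundary size; without it, the $2^{O(w_\G(n))}$ running time for \fvs does not follow from the rank-based technique alone.
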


%--------------------------------------------------------------------------------------

\begin{proof}
{\sf\textbf{(1)}} We use the same algorithm as De~Berg~\etal~\cite{bbkmz-ethf-20}, which we recall
here. Let $X$ be a maximum independent set. Notice that any clique $C$ can contain at
most one node from~$X$. Therefore, the number of ways that a given set $X$ can intersect
a clique-based separator~$\myS$ is at most
\[
\prod_{C\in \myS} (|C|+1)=2^{\sum_{C\in \myS} \log(|C|+1)}=2^{\weight(\myS)}.
\]
We can use a simple divide-and-conquer strategy to find the set $X$. We compute a
separator~$\myS$, and for all independent sets $X_\myS$ in $\bigcup_{C\in \myS} C$, 
we will find the largest independent set that $X_\myS$ is a part of. 
To this end we consider the graph $G_{X_\myS} := G[V\setminus (N(X_\myS) \cup \bigcup_{C\in \myS} C)]$, that is,
the graph~$G$ where we remove the nodes of the separator as well as those nodes that
neighbor a node in $X_S$. We compute a maximum independent set on each side of the
separator, restricting ourselves to $G_{X_\myS}$, in a recursive fashion. 
The correctness follows from the fact that the set $X^*_\myS:=X\cap (\bigcup_{C\in \myS} C)$ 
will be considered by the algorithm, and then the remainder of $X$ on each side of the
separator will form maximum independent sets within that side.

Since the number of sets $X_S$ to enumerate is at most $2^{\weight(\myS)}\leq 2^{w_\G(n)}$, we get the following recursion for the running time:
\[T(n)=2^{w_\G(n)}\big(2T(\beta n) + \poly(n)\big),\]
where $\beta <1$ is the balance factor of the separator theorem. Solving this recursion yields that $T(n)=2^{O(w_G(n))}$, as required.
\medskip

\noindent {\sf\textbf{(2)}} We solve the complement problem, \textsc{Maximum Induced Forest}. Observe that an
induced forest can have at most two nodes in each clique, therefore an induced forest
with node set $X$ can intersect a clique-based separator $\myS$ in at most
\[
\prod_{C\in \myS} \left(\binom{|C|}{2}+|C|+1\right)< \prod_{C\in \myS}(|C|+1)^2=2^{\sum_{C\in \myS} 2\log(|C|+1)}=2^{2\cdot \weight(\myS)}
\]
many ways. However, unlike with \mis, the solutions on the two sides of the separator are
not completely independent: we must ensure that their union is a forest, and does not
induce any cycles. Let $B$ be a set of nodes in $G$, called \emph{boundary nodes}, and
let $\pi$ be a partition of $B$. We say that a forest $X$ \emph{realizes} $\pi$ if
$B\subseteq X$ and $x,y\in B$ are in the same connected component of $G[X]$ if and only
if they are in the same partition class of $\pi$.

A separator can contain up to $O(|\myS|)$ nodes of the solution forest, which will become boundary nodes in $A\cup S$ and $B\cup S$ with some specific partitions. Formally, we need to solve the following problem: given a graph $G$, a boundary node set $B$ and a partition $\pi$, find the maximum induced forest of $G$ that realizes $\pi$ on $B$. In order to keep the running time small, we need to ensure that this boundary set stays small compared to the size of the instance in the recursion tree. We observe that all of our separators can be extended to achieve balance with respect to any node set $W\subset V(G)$. Using such separators, we can ensure that the number of boundary nodes for an instance of size $n$ during the recursion is at most $O(s_\G(n))$ by for example taking separators that are balanced with respect to all nodes and the boundary set $B$ alternatingly. See~\cite{BergBKK18} for a different way to keep the boundary set small.
As $B$ has $|B|^{O(|B|)}=2^{O(|B|\log|B|)}=2^{O(s_{\G}(n)\log n)}$ partitions, the running time recursion is
\[T(n)=2^{2w_\G(n)+O(s_G(n)\log n)}(T(\beta'n)+\poly(n)),\]
where $\beta'<1$ is a constant that depends only on the separator balance factor $\beta$ and the size function $s_\G$. This gives a running time of \[T(n)=2^{O(w_\G(n)+s_G(n)\log n)}.\]
The running time can be further reduced by using the rank-based approach of Bodlaender~\etal~\cite{rankbased}. Here instead of trying all partitions, one only needs to keep track of a so-called \emph{representative set} of $2^{O(|B|)}$ weighted partitions, and update these representative sets bottom-up in the recursion tree. We can implement this technique similarly to earlier implementations. See~\cite{bbkmz-ethf-20} for an example with maximum induced forest and~\cite{BergBKK18} for using the technique in a separator-based divide-and-conquer algorithm) without a tree-decomposition. The resulting algorithm has the following running time recursion:
\[T(n)=2^{2w_\G(n)+O(s_G(n))}(T(\beta'n)+\poly(n))=2^{O(w_\G(n))}(T(\beta'n)+\poly(n)).\]
Solving the recursion gives the desired running time of $T(n)=2^{O(w_\G(n))}$.
\medskip

\noindent {\sf\textbf{(3)}} We start by checking if there is a clique of size at least $q+1$. 
This can be done by brute force in $O(n^{q+1})=\poly(n)$ time. If such a clique is found
then we can reject. Otherwise, we use a divide-and-conquer strategy: for all proper
$q$-colorings of the separator node set $S=\bigcup_{C\in \myS} C$, we solve the problem
recursively for the node sets $S_1=S\cup A$ and  $S_2=S\cup B$, where $A$ and $B$ are the
two sides of the separator. Note that in order to make this work, we in fact solve a
slightly extended problem, where some arbitrary subset of nodes are already colored, 
and the task is to extend this coloring to a proper $q$-coloring.

As cliques have size at most $q=O(1)$, a separator consisting of $s_\G(n)$ cliques has
$O(s_\G(n))$ nodes, and its nodes can be colored in at most
$q^{O(s_\G(n))}=2^{O(s_\G(n))}$ many ways. The running time obeys the 
recurrence
\[
T(n)=2^{O(s_\G(n))}(T(\beta' n)+\poly(n)),
\]
where $\beta'<1$ is a constant such that $\max(|A|,|B|)+|S|<\beta' n$ is guaranteed by
the separator for $n$ large enough. The recursion solves to $T(n)=2^{O(s_\G(n))}$.
\end{proof}

%--------------------------------------------------------------------------------------
\section{Concluding Remarks}

We showed how clique-based separators with sub-linear weight can be constructed for various classes of intersection graphs which involve non-fat objects. The main advantage of our approach is that we can solve different problems in the graph classes we study in a uniform manner. There are several natural questions that are left open. Some are listed below.
\begin{itemize}
    \item \textbf{Improving the bound for geodesic disks and adding holes.} Our bound on geodesic disks is directly derived by our result on pseudo-disks. However, geodesic disks are much less general than pseudo-disks (and ``closer'' to regular disks). Hence, one would expect that the optimal weight is closer to $O(\sqrt n)$. If we allow our polygon to have holes, then our approach for geodesic disks no longer works. Indeed, it is easy to see that even after applying our perturbation scheme, the resulting objects can intersect each other more than two times. 
    \item \textbf{Improving the bound for pseudo-disks.} Regarding pseudo-disks, an interesting result \cite{small-pseudodisk} states that in every finite family of pseudo-disks in the plane one can find a ``small'' one, in the sense that it is intersected by only a constant number of disjoint pseudo-disks. This property is also shared by, for instance, convex fat objects. Does this mean that the two graph classes are related in some natural way? If yes, could this connection be exploited to construct separators with better bounds?
\end{itemize}

%--------------------------------------------------------------------------------------

%--------------------------------------------------------------------------------------

\bibliography{separator-refs}

\newpage

%\appendix

%\section*{Appendix}

%\input{sec-applications.tex}
%\input{app-vis-graph.tex}
%\input{app-missing-proofs.tex}

\end{document}